\newtheorem{theorem}{Theorem}
\newtheorem{lemma}{Lemma}
\newtheorem{corollary}{Corollary}
\def\E{\mathbb{E}}
\def\P{\mathbb{P}}
\newcommand{\premu}{\boldsymbol{\mu}_0}
\newcommand{\presigma}{\boldsymbol{\Sigma}_0}
\newcommand{\postmu}{\boldsymbol{\mu}_1}
\newcommand{\postsigma}{\boldsymbol{\Sigma}_1}
\newcommand{\hpostsigma}{\widehat{\boldsymbol{\Sigma}}_1}
\newcommand{\grid}{\mathcal{G}}
\newcommand{\x}{\mathbf{x}}
\newcommand{\nx}{\tilde{\mathbf{x}}}
\newcommand{\dime}{p}
\newcommand{\noise}{\mathbf{e}}
\newcommand{\noisedev}{\sigma_e}
\newcommand{\noisesigma}{\sigma_e^2}
\newcommand{\sens}{\Delta}
\newcommand{\mnoisesigma}{\boldsymbol{D}_e}
\newcommand{\eigenpre}{\nu_0}
\newcommand{\eigenpost}{\nu_1}
\newcommand{\FAR}{\text{FAR}}
\newcommand{\ADD}{\underline{\text{ADD}}}
\newcommand{\Abbe}{PLOD\xspace}
\begin{document}

\title{
\vspace{-0.9em}
Privacy-Preserving Line Outage Detection in Distribution Grids: An Efficient Approach with Uncompromised Performance}
\author{Chenhan Xiao,~\IEEEmembership{Student Member,~IEEE}, Yizheng Liao,~\IEEEmembership{Member,~IEEE}, Yang Weng,~\IEEEmembership{Senior Member,~IEEE}
\vspace{-2.5em}
}
\maketitle
\begin{abstract}
Recent advancements in research have shown the efficacy of employing sensor measurements, such as voltage and power data, in identifying line outages within distribution grids. However, these measurements inadvertently pose privacy risks to electricity customers by potentially revealing their sensitive information, such as household occupancy and economic status, to adversaries. To safeguard raw data from direct exposure to third-party adversaries, this paper proposes a novel decentralized data encryption scheme. The effectiveness of this encryption strategy is validated via demonstration of its differential privacy attributes by studying the Gaussian differential privacy. Recognizing that the encryption of raw data could affect the efficacy of outage detection, this paper analyzes the performance degradation by examining the Kullback–Leibler divergence between data distributions before and after the line outage. This analysis allows us to further alleviate the performance degradation by designing an innovative detection statistic that accurately approximates the optimal one. Manipulating the variance of this statistic, we demonstrate its ability to approach the optimal detection performance. The proposed privacy-aware detection procedure is evaluated using representative distribution grids and real load profiles, covering 17 distinct outage configurations. Our empirical results confirm the privacy-preserving nature of our approach and show that it achieves comparable detection performance to the optimal baseline.
\end{abstract}

\vspace{-1.5em}
\section{Introduction}
In distribution grids, the detection of line outages is essential for system monitoring and control, playing a critical role in the restoration of network stability and the mitigation of customer losses. According to the U.S. Energy Information Administration \cite{USreport}, customers experienced over seven hours of power interruptions in 2021, attributed mainly to severe weather events and power supply shortages. Traditionally, utility companies have installed smart meters with Advanced Metering Infrastructure (AMI) and Fault Location, Isolation, and Service Restoration (FLISR) systems to report outages in cases of power absence \cite{location2014isolation}. However, these ``last gasp'' notifications are limited when customers continue to have power after the line outage, from distributed energy resources such as rooftop solar panels, battery storage, and electric vehicles, which are now widely adopted. Additionally, in some urban areas, secondary distribution grids are mesh networks. In this setup, a single line outage induced by circuit faults or human interference may not result in a power outage because of alternative power supply routes. Consequently, smart meters at customer end also cannot report outages.

To identify these types of line outages, real-time sensor measurements, including voltage magnitudes, phasor angles, and load estimates, have been employed and confirmed for their effectiveness \cite{he2010fault, babakmehr2019compressive, sevlian2017outage, soleymani2019unsupervised, liao2021quick, liao2022quickest}. However, the utilization of real-time sensor measurements raises privacy concerns, particularly regarding the potential exposure of sensitive information. For example, if a customer's time-series grid data were provided to an untrusted third party, they could deduce appliance usage \cite{zoha2012non} and unveil details about household occupancy and economic status (as illustrated in the lower half of Fig. \ref{fig:bigpic}) using non-intrusive load monitoring techniques \cite{wood2003dynamic, mcdaniel2009security}. Therefore, it is crucial to safeguard such data against direct disclosure to third parties during the outage detection process.

In pursuing a privacy-aware outage detection procedure, we choose to develop a decentralized randomization scheme based on a probabilistic methodology for encrypting the raw data. Among the methodologies for utilizing sensor measurements in outage detection, both deterministic \cite{sevlian2017outage, babakmehr2019compressive} and probabilistic \cite{dwivedi2021scalable, xiao2023distribution} approaches have been proposed. Deterministic methods typically set a threshold and declare an outage when data changes exceed this threshold. Although these techniques are easy to implement, they do not align with our concept of a randomization scheme for data encryption. In contrast, probabilistic approaches focus on monitoring changes in the probability distribution of sensor measurements, providing a suitable foundation for our approach. The core idea is to alter the absolute values of end-user measurements to protect end-user privacy while preserving the relative changes in data distribution before and after an outage event (see upper half of Fig. \ref{fig:bigpic}). We also want to point out that while community-level data, aggregated from individual users, can inherently safeguard end-user privacy, it complicates the detection of line outages in large distribution grids and renders the precise localization of outage branches unattainable.

Specifically, we aim to develop a privacy-aware outage detection procedure based on our prior research \cite{liao2021quick, xiao2023distribution}, which utilizes a probabilistic change point detection (CPD) method known for its guaranteed performance. The CPD approach is adopted for detecting changes in the probabilistic distribution of sensor measurements while adhering to a predefined false alarm tolerance constraint \cite{shiryaev1963optimum}. In our problem, the sensor measurements are modeled as a time-series data stream $\x[n] \in \mathbb{R}^{\dime}$, where $n \in \mathbb{N}$ corresponds to the time step. These time-series data are assumed to exhibit distinct probabilistic distributions before and after an outage time $\lambda \in \mathbb{N}$:
\begin{equation}
\label{eq:sequence}
\x[n] \stackrel{i.i.d}{\sim} g, \ n < \lambda \quad \text{and} \quad \x[n] \stackrel{i.i.d}{\sim} f, \ n \ge \lambda,
\end{equation}
where $g$ and $f$ represent the distributions before and after the outage, respectively. The CPD framework with sensor data defined in (\ref{eq:sequence}) has been applied to detect line outages and faults in transmission grids  \cite{Chen2016Quick} as well as in DC micro-grids  \cite{gajula2021quickest}. These applications benefit from theoretical guarantees regarding optimal detection delay, as studied in \cite{tartakovsky2005general}.

In addition to detecting power line outages using sensor data from electricity customers, many other applications of the CPD framework involve similar privacy concerns related to the use of sensitive data. Such applications include monitoring patient health based on heart rates \cite{yang2006adaptive} and evaluating financial conditions using transaction data \cite{hand2001prospecting}. Consequently, the development of a privacy-aware CPD that preserves its detection performance has emerged as a substantial area of interest and is the primary focus of this paper.

To safeguard privacy, recent studies have introduced randomization schemes to encrypt data, effectively concealing sensitive information from potential attackers. In assessing the level of privacy achieved by such randomization schemes, the differential privacy framework \cite{dwork2006calibrating} is employed, offering a worst-case privacy guarantee. In the context of parametric CPD, where distributions $g$ and $f$ are known in (\ref{eq:sequence}), \cite{cummings2018differentially} utilized noisy approximation algorithms developed by \cite{dwork2014algorithmic} to compute a privately approximated change-point maximum likelihood estimation. In non-parametric CPD scenarios where the distributions $g$ and $f$ are unknown, \cite{cummings2020privately} privately estimated the change points using the Mann-Whitney test \cite{wilcoxon1945individual}. These studies involved encrypting the detection statistic with Laplace noise after a trusted third party collected the raw data $\x[n]$. In cases where a trusted third party is absent, \cite{berrett2021locally} proposed randomizing the raw data with Laplace noise, ensuring that the raw data remains inaccessible to anyone except its original holder. Despite the privacy guarantees offered by existing randomization approaches, there remain several limitations due to the complexity of the privacy mechanisms or the intricate structure of the data. First, to pursue the differential privacy framework, many existing works \cite{cummings2018differentially, cummings2020privately} choose to apply noise to statistic-level after raw data is collected, potentially exposing raw data to breaches before encryption. Second, many existing works lack a rigorous quantification of how privacy mechanisms affect detection performance \cite{lau2020privacy,cummings2018differentially, cummings2020privately,lau2020quickest}. Third, despite the advances in privacy protection, there is little exploration of methods to mitigate the negative effects of these privacy measures on system functionality \cite{zhang2021single, aminikhanghahi2017survey}. To the best of our knowledge, safeguarding privacy without compromising detection performance remains out of the reach of existing theory.

In this paper, we narrow our focus on the parametric setting of CPD for line outage detection. Having knowledge of the distributions $g$ and $f$ allows us to quantify the cost associated with introducing privacy guarantees into the outage detection procedure. Furthermore, it empowers us to design a novel detection statistic aimed at mitigating this cost. To secure the privacy at user-level, our first innovation is developing a decentralized encryption scheme directly on raw data. Unlike existing work \cite{cummings2018differentially,cummings2020privately} that introduced noise to statistics after raw data is collected, our approach ensures user data's confidentiality before any external access. To demonstrate that this scheme adheres to classic differential privacy, we detour the proof through Gaussian differential privacy \cite{dong2019gaussian}, an extension of differential privacy applicable to arbitrary distributions.

Despite the privacy guarantee, there is an inevitable compromise in detection performance due to the encryption of data. Our second contribution is to quantify the extent of performance compromise in pursuit of varying privacy levels. By investigating the Kullback–Leibler divergence between distributions $f$ and $g$, we pinpoint how encryption-induced noises extend the outage detection delays. Our findings provide a foundational framework for assessing the implications of privacy-enhancing technologies on operational capabilities.

In our third contribution, we tackle the challenge of performance degradation due to privacy measures by devising a novel detection statistic. This statistic innovatively estimates the optimal statistic achievable with raw data, minimizing detection delays and respecting false alarm constraints. Our analysis, rooted in Jensen's inequality, reveals that controlling the variance of this statistic significantly narrows the performance gap. By strategically reducing the variance, we demonstrate the potential to virtually eliminate the adverse impacts of privacy protection on detection performance, marking a significant advancement in the field.

In summary, our contributions include: (1) We innovate by introducing noise directly into end-user-level data while ensuring adherence to the differential privacy framework. It enhances privacy beyond existing methods that only add noise to statistic-level after end-user-level data is collected. (2) We rigorously quantify the impact of privacy-induced noise on outage detection performance using Kullback-Leibler divergence, providing a detailed analysis of the trade-off between privacy and efficiency. (3) We propose a novel noise-mitigation technique that significantly reduces the negative effects of privacy protection on detection accuracy, achieving near-optimal performance levels. To validate our contributions, we conduct comprehensive experiments utilizing representative distribution grids and real load profiles, covering 17 distinct outage configurations.

In the following, Section \ref{sec:pre} introduces the preliminary aspects of our system modeling, the CPD framework, and the differential privacy framework. Section \ref{sec:method} presents our privacy-aware detection procedure that dose not compromise detection performance. Section \ref{sec:simulation} assesses our method using four distribution grids and real-world load profiles. Section \ref{sec:conclusion} concludes of this paper.

\begin{figure*}[ht]
\centering
\includegraphics[width=1\linewidth]{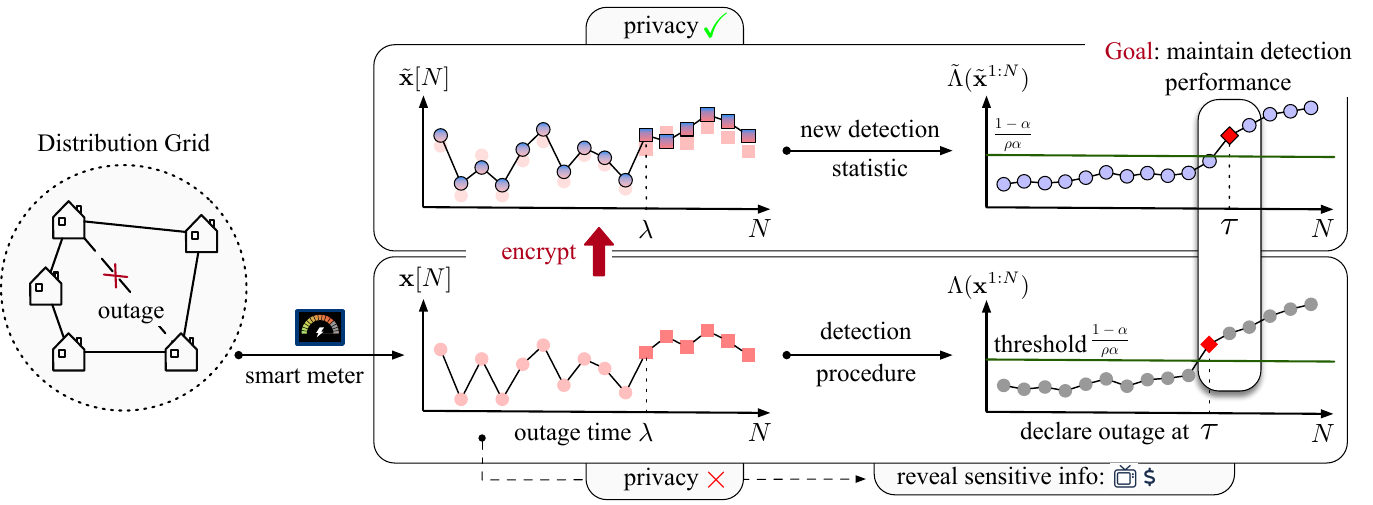}
\vskip -0.1in
\caption{An overview of the privacy-aware line outage detection problem in the distribution grid.}
\vskip -0.15in
\label{fig:bigpic} 
\end{figure*}

\vspace{-1.2em}
\section{Preliminary}
\label{sec:pre}
{\bf System Modeling.}\ To illustrate our probabilistic design for the privacy-aware detection procedure, we define the following variables. The voltage magnitude at each bus $i \in \grid$ is modeled as a random variable $V_i$, where $\grid:=\{1,2,\cdots,\dime\}$ represents the distribution grid as a graph containing $\dime>0$ buses. At time step $n$, we denote the realization of $V_i$ as $v_i[n] \in \mathbb{R}$ in per unit, and we use $\mathbf{v}[n]=\{v_1[n],\cdots,v_{\dime}[n]\}\in \mathbb{R}^{\dime}$ to represent the collection of voltage magnitudes in the grid $\grid$. Finally, we use the notation $\x[n] = \mathbf{v}[n] - \mathbf{v}[n-1]$ to denote the incremental change in voltage magnitudes.\footnote{For simplicity, we use the notation $\x$ instead of $\Delta \mathbf{v}$.} 

We utilize voltage increment data because \cite{xiao2023distribution} establishes that this data adheres to two multivariate Gaussian distributions, denoted as $g\sim\mathcal{N}(\premu,\presigma)$ and $f\sim\mathcal{N}(\postmu,\postsigma)$ before and after a line outage. For the sake of simplicity, we also use the notation $\x^{1:N} = \{\x[1],\cdots,\x[N]\}$ to represent all the measurements up to time $N$.

Based on the modeling, the problem of detecting distribution grid line outages while preserving privacy is formally defined as follows (refer to Fig. \ref{fig:bigpic} for visualization):
\begin{itemize}[leftmargin=*]
    \item {\bf Given}: A stream of voltage magnitude increments $\x^{1:N}$ from the smart meters.
    \item {\bf Find}: The line outage time $\lambda$ as quickly as possible.
    \item {\bf Require:} Avoid disclosing the raw data $\x^{1:N}$.
\end{itemize}

{\bf Outage detection based on change point detection.} To detect the outage time $\lambda$ in (\ref{eq:sequence}) using voltage magnitude increments $\x^{1:N}$, our previous work \cite{liao2021quick,xiao2023distribution} follows the Bayesian detection procedure  \cite{shiryaev1963optimum,tartakovsky2005general}. That is, identifying the outage time is equivalent to performing the hypothesis test:
$$
   \mathcal{H}_0:\lambda>N \quad\text{and}\quad \mathcal{H}_1:\lambda\leq N
$$
sequentially given data $\x^{1:N} = \{\x[1],\cdots ,\x[n],\cdots,\x[N]\}$. As data is received in a streaming manner ($N$ increases), the first time hypothesis $\mathcal{H}_0$ is rejected reveals the value of $\lambda$. To determine when to reject $\mathcal{H}_0$, the posterior probability ratio
\begin{align}\label{eq:statistic}
\Lambda(\x^{1:N})=\frac{\P(\lambda\leq n|\x^{1:N})}{\P(\lambda>n|\x^{1:N})} = \sum_{k=1}^N\pi_N^k\prod_{n=k}^{N}\frac{f(\x[n])}{g(\x[n])}
\end{align}
is calculated at each time step $N$. $\lambda\in\mathbb{N}$ is assumed to follow a prior distribution $\pi$, and we define $\pi_N^k=\frac{\pi(k)}{\sum_{k=N+1}^\infty \pi(k)}$ for simplicity. The ratio in (\ref{eq:statistic}) compares the probabilities of ``outage occurred ($\lambda\leq N$)'' and ``outage did not occur ($\lambda> N$)'' given the historical measurements $\x^{1:N}$. A larger ratio indicates that ``outage occurred'' is more likely than ``outage did not occur''. Therefore, we declare the outage time $\lambda$ when the ratio in (\ref{eq:statistic}) exceeds a predefined threshold.
By the Shiryaev-Roberts-Pollaks procedure \cite{shiryaev1963optimum,tartakovsky2005general}, the following threshold in Theorem \ref{theorem:Bayesian} optimally considers the trade-off between the false alarm and the detection delay.

\begin{theorem}\label{theorem:Bayesian}
When $\lambda$ follows a geometric prior $\text{Geo}(\rho)$, we declare the outage time when the posterior probability ratio $\Lambda(\x^{1:N})$ exceeds the threshold $\frac{1-\alpha}{\rho\alpha}$ for the first time as
\begin{equation}\label{eq:stop rule}
\tau=\inf\{N\in\mathbb{N}:\Lambda(\x^{1:N})\geq \frac{1-\alpha}{\rho\alpha}\}.
\end{equation}
\end{theorem}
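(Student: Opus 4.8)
The plan is to read the stopping rule \eqref{eq:stop rule} as the Shiryaev--Roberts--Pollak procedure specialized to a geometric change-point prior, and then to carry out two distinct tasks: (i) show that the threshold $\frac{1-\alpha}{\rho\alpha}$ is the value that enforces the false-alarm constraint $\FAR(\tau)=\P(\tau<\lambda)\le\alpha$, and (ii) invoke the classical Bayesian optimality theory of \cite{shiryaev1963optimum,tartakovsky2005general} for the assertion that, under that constraint, this rule minimizes the detection delay. The natural entry point is that $\Lambda(\x^{1:N})$ in \eqref{eq:statistic} is already written as the posterior odds $\P(\lambda\le N\mid\x^{1:N})/\P(\lambda>N\mid\x^{1:N})$; substituting $\pi(k)=\rho(1-\rho)^{k-1}$ gives $\sum_{k=N+1}^{\infty}\pi(k)=(1-\rho)^N$, and collecting terms yields the one-step recursion $\Lambda(\x^{1:N})=\frac{f(\x[N])}{g(\x[N])}\cdot\frac{\rho+\Lambda(\x^{1:N-1})}{1-\rho}$ with $\Lambda(\x^{1:0})=0$. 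This exhibits $\tau$ as a threshold rule on a Markovian statistic, which is exactly the form in which the optimality results are stated.

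For the false-alarm bound I would relate $\FAR(\tau)$ to the value of the statistic at the stopping instant. With $p_N:=\P(\lambda\le N\mid\x^{1:N})=\Lambda(\x^{1:N})/(1+\Lambda(\x^{1:N}))$, conditioning on $\x^{1:\tau}$ gives
\[
\FAR(\tau)=\P(\tau<\lambda)=\E\big[\,\P(\lambda>\tau\mid\x^{1:\tau})\,\big]=\E\big[\,1-p_\tau\,\big]=\E\Big[\tfrac{1}{1+\Lambda(\x^{1:\tau})}\Big],
\]
and since $\Lambda(\x^{1:\tau})$ is at least the threshold at the instant $\tau$, the right-hand side is at most $\alpha$. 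An equivalent and somewhat cleaner route is the change of measure $\frac{d\P}{d\P_\infty}\big|_{\x^{1:N}}=(1-\rho)^{N}\big(1+\Lambda(\x^{1:N})\big)$ (with $\P_\infty$ the law under which every $\x[n]\sim g$), which is a $\P_\infty$-martingale started at $1$; combined with the identity $\P(\tau<\lambda)=\E_\infty\!\big[(1-\rho)^{\tau}\big]$ — the segment preceding $\tau$ is an i.i.d.\ $g$ sample under any change time exceeding $\tau$ — optional stopping reproduces the same inequality while keeping the geometric tail $(1-\rho)^{N}$ and the parameter $\rho$ explicit.

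The optimality half I would not reprove from scratch: for the geometric prior the Shiryaev rule is exactly Bayes-optimal, and \cite{tartakovsky2005general} further gives that it asymptotically minimizes $\ADD(\sigma)=\E[\sigma-\lambda\mid\sigma\ge\lambda]$ over all stopping times with false-alarm probability at most $\alpha$, with the delay rate controlled by the divergence $D(f\,\|\,g)$ — the same quantity Section~\ref{sec:method} revisits under encryption. The step I expect to be the real obstacle is matching the constant exactly: $\Lambda(\x^{1:\tau})$ overshoots its threshold, so the inequality above is conservative, and turning it into the specific value $\frac{1-\alpha}{\rho\alpha}$ — as opposed to the cruder $\frac{1-\alpha}{\alpha}$ that the bare posterior-odds reading would suggest — requires care about normalization: whether $\Lambda$ is the normalized posterior odds or the un-normalized Shiryaev statistic $\sum_{k}(1-\rho)^{k-N-1}\prod_{n=k}^{N}\frac{f(\x[n])}{g(\x[n])}$, and whether the false-alarm metric absorbs the mean inter-change scale $1/\rho$ of the geometric prior. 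The martingale and optional-stopping machinery itself is routine; it is this bookkeeping, together with the overshoot control behind it, where I would be most careful.
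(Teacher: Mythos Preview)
The paper does not supply a proof of Theorem~\ref{theorem:Bayesian}: it simply states the stopping rule and attributes both the false-alarm control and the asymptotic-delay optimality to the Shiryaev--Roberts--Pollak theory, citing \cite{shiryaev1963optimum,tartakovsky2005general}. Your proposal is therefore already more than the paper offers---you sketch the posterior-odds derivation, the Markov recursion, and the conditioning argument $\FAR(\tau)=\E\big[1/(1+\Lambda(\x^{1:\tau}))\big]$, and you defer the optimality half to the same two references the paper invokes. In that sense your approach is fully aligned with the paper's treatment.

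Your flagged concern about the constant is well placed and is in fact the only real content beyond the citations. With $\Lambda$ defined exactly as in \eqref{eq:statistic} (true posterior odds), the threshold $\frac{1-\alpha}{\alpha}$ already yields $\FAR(\tau)\le\alpha$; the paper's threshold $\frac{1-\alpha}{\rho\alpha}$ is larger and hence strictly more conservative, giving $\FAR(\tau)\le\frac{\rho\alpha}{1-\alpha(1-\rho)}\le\alpha$. The extra factor of $1/\rho$ is the one that appears when the statistic is written in its un-normalized Shiryaev--Roberts form (or, equivalently, when the average-run-length scaling $1/\rho$ is folded into the threshold, as is common in the asymptotic statements of \cite{tartakovsky2005general}). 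The paper does not resolve this bookkeeping either---it simply adopts the threshold from the cited literature---so your caveat is appropriate rather than a gap.
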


The detection procedure (\ref{eq:stop rule}) constrains that the false alarm rate (FAR) remains below a pre-defined tolerance level $\alpha$, i.e., $\FAR(\Lambda,f,g)\overset{\Delta}{=}\mathbb{P}(\tau<\lambda)\le \alpha$. More importantly, as $\alpha\to0$, $\tau$ is asymptotically optimal for minimizing the average detection delay (ADD) as
\begin{align}\label{eq:lowerbound}
&\E[\tau-\lambda|\tau\geq\lambda] =\inf_{\P(\tau^{\ast}\leq\lambda)\leq\alpha}\E[\tau^{\ast}-\lambda|\tau^{\ast}\geq\lambda]\nonumber\\
&= \frac{|\log\alpha|}{-\log(1-\rho)+D_{\text{KL}}(f||g)}\overset{\Delta}{=}\ADD(\Lambda,f,g),
\end{align}
where $D_{\text{KL}}(f\|g)$ denotes the Kullback–Leibler (KL) divergence between distributions $f$ and $g$.

In the practical application of outage detection, the determination of the threshold $\alpha$ in (\ref{eq:stop rule}) is achieved through a systematic and iterative methodology, which is anchored in both statistical analysis and operational considerations. Initial selection is based on analyzing sensor data $\x^{1:N}$ to distinguish between normal variations and potential outages, taking into account the balance between minimizing detection delays and reducing false alarms. This choice is refined through iterative testing with historical data, allowing us to fine-tune $\alpha$ to optimize detection accuracy while considering the operational impact of false positives. Moreover, we incorporate the flexibility to adjust $\alpha$ dynamically, accommodating seasonal variations and evolving grid conditions, ensuring our algorithm remains effective and reliable across different scenarios.

After detecting the occurrence of a line outage, accurately localizing the affected branch is vital for system restoration. In \cite{liao2021quick}, researchers introduced a precise outage localization method by demonstrating the conditional independence of voltage increments between two disconnected buses. Their technique involved computing the conditional correlation between all potential bus pairs in the grid and detecting changes from non-zero to zero values. Unlike methods relying on nodal electric circuit analysis for fault location estimation, this approach offers a distinct method that relies solely on the covariance matrices of the voltage data. This attribute is shown later to be efficient even in privacy-aware contexts.

To estimate the conditional correlation between bus $i$ and bus $k$, the covariance matrix $\boldsymbol{\Sigma}$ is utilized. Let set $\mathcal{I}:=\{i,k\}$ and $ \mathcal{K}:=\grid\backslash\{i,k\}$, the covariance matrix is decomposed as $\boldsymbol{\Sigma} = \left[ \begin{matrix}\boldsymbol{\Sigma}_{\mathcal{I}\mathcal{I}} & \boldsymbol{\Sigma}_{\mathcal{I}\mathcal{K}}\\ \boldsymbol{\Sigma}_{\mathcal{I}\mathcal{K}}^\top & \boldsymbol{\Sigma}_{\mathcal{K}\mathcal{K}}\end{matrix}\right]$. Based on this, the conditional correlation $\rho_{ik}$ between bus $i$ and bus $k$ is
\begin{equation}\label{eq:conditional}
    \rho_{ik}(\boldsymbol{\Sigma}) = \frac{\boldsymbol{\Sigma}_{\mathcal{I}|\mathcal{K}}(1,2)}{\sqrt{\boldsymbol{\Sigma}_{\mathcal{I}|\mathcal{K}}(1,1)\boldsymbol{\Sigma}_{\mathcal{I}|\mathcal{K}}(2,2)}},
\end{equation}
where the conditional covariance is computed by the Schur complement \cite{boyd2004convex} as $\boldsymbol{\Sigma}_{\mathcal{I}|\mathcal{K}} = \boldsymbol{\Sigma}_{\mathcal{I}\mathcal{I}} - \boldsymbol{\Sigma}_{\mathcal{I}\mathcal{K}}\boldsymbol{\Sigma}_{\mathcal{K}\mathcal{K}}^{-1}\boldsymbol{\Sigma}_{\mathcal{I}\mathcal{K}}^\top$.

\begin{theorem}(Line outage localization). \label{lemma:local}
The conditional correlation is calculated based on (\ref{eq:conditional}) for every pair of $(i,k)$ as \vspace{-0.5em}
\begin{equation}
    \underbrace{\rho_{ik}^{-} = \rho_{ik}(\presigma)}_{\text{before outage}}\quad\text{and}\quad \underbrace{\rho_{ik}^{+} = \rho_{ik}(\hpostsigma)
    }_{\text{after outage}}.\vspace{-0.5em}
\end{equation}
The branch between bus $i$ and $k$ is out-of-service if
$
|\rho_{ik}^{-}| > \delta_{\max}$ and $|\rho_{ik}^{+}| < \delta_{\min}$. The thresholds are set as $\delta_{\max}=0.5$ and $\delta_{\min}=0.1$ based on real-world outage data to check if the correlation changes from non-zero to near-zero value.
\end{theorem}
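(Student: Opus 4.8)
The plan is to derive the localization test from the Gaussian graphical structure that the linearized distribution-grid model imposes on the voltage increments, and then to absorb the use of the empirical covariance $\hpostsigma$ (rather than the true $\postsigma$) into the gap between the thresholds $\delta_{\min}$ and $\delta_{\max}$. First I would write $\x[n]=\boldsymbol{M}_\branch\,\boldsymbol{u}[n]$ from the linearized branch-flow equations, where $\boldsymbol{M}_\branch$ is the nodal voltage-sensitivity matrix determined by the line impedances and the in-service branch set $\branch$, and $\boldsymbol{u}[n]$ stacks the per-bus real and reactive injection increments, modeled as mutually independent across buses. This recovers $\x[n]\sim\mathcal{N}(\cdot,\boldsymbol{\Sigma})$ with $\boldsymbol{\Sigma}=\boldsymbol{M}_\branch\,\mathrm{Cov}(\boldsymbol{u})\,\boldsymbol{M}_\branch^{\top}$, the Gaussian model of \cite{xiao2023distribution}. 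By block inversion, the Schur complement $\boldsymbol{\Sigma}_{\mathcal{I}|\mathcal{K}}$ in (\ref{eq:conditional}) is exactly the inverse of the $\{i,k\}$ block of $\boldsymbol{\Sigma}^{-1}$, so $\rho_{ik}(\boldsymbol{\Sigma})=0$ if and only if $V_i$ and $V_k$ are conditionally independent given the remaining buses.

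Next I would establish the connectivity dichotomy, which is the analytic core and is where I would reuse the argument of \cite{liao2021quick}. If branch $\{i,k\}$ is out of service, then in a radial grid its removal is a bridge cut that splits $\grid$ into two electrically decoupled parts with $i$ and $k$ on opposite sides; then $V_i$ is a function of one part's injection increments only and $V_k$ of the other's, and independence of the two injection blocks forces $V_i\perp V_k$ (even marginally), so $\rho_{ik}(\postsigma)=0$. Conversely, if $\{i,k\}\in\branch$, a short computation with the entries of $\boldsymbol{\Sigma}^{-1}$ --- equivalently, the observation that $V_i$ and $V_k$ stay directly coupled through $\boldsymbol{M}_\branch$ with no compensating path in a tree --- gives $(\boldsymbol{\Sigma}^{-1})_{ik}\neq0$, hence $|\rho_{ik}(\boldsymbol{\Sigma})|\ge c_0$ for a strictly positive, grid-dependent constant $c_0$; in particular $|\rho_{ik}^{-}|=|\rho_{ik}(\presigma)|\ge c_0$ for every in-service branch.

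Finally I would pass from $\postsigma$ to $\hpostsigma$: on the region where all relevant Schur complements stay nonsingular, $\boldsymbol{\Sigma}\mapsto\rho_{ik}(\boldsymbol{\Sigma})$ in (\ref{eq:conditional}) is locally Lipschitz, so a standard sub-Gaussian sample-covariance concentration bound yields $|\rho_{ik}(\hpostsigma)-\rho_{ik}(\postsigma)|\le\varepsilon$ with high probability, with $\varepsilon\to0$ as the post-change observation window grows. Combining the cases: for the genuinely outaged branch $|\rho_{ik}^{-}|\ge c_0$ and $|\rho_{ik}^{+}|\le\varepsilon$; for any branch still in service both quantities exceed $c_0$; and any pair joined by no branch is either already decoupled or screened off by the known topology. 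Hence any calibration $\varepsilon<\delta_{\min}<\delta_{\max}<c_0$ makes ``$|\rho_{ik}^{-}|>\delta_{\max}$ and $|\rho_{ik}^{+}|<\delta_{\min}$'' hold precisely for the out-of-service branch, and $\delta_{\min}=0.1$, $\delta_{\max}=0.5$ are the empirical instantiation of this separation on real outage data.

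The step I expect to be the main obstacle is the uniform lower bound $|\rho_{ik}^{-}|\ge c_0$ over all in-service branches: through the Schur complement, $\rho_{ik}$ depends on the entire topology, all line parameters, and the conditioning of $\mathrm{Cov}(\boldsymbol{u})$, so a clean bound really needs the radial assumption (to rule out cancelling paths) together with a uniform well-conditioning hypothesis on $\presigma$. For mesh secondary feeders the ``exactly zero'' half also weakens, since deleting one branch need not disconnect $i$ from $k$; there I would instead argue from the rank-one change induced in the bus admittance matrix $\admit$ that $\rho_{ik}$ drops by a definite amount --- the more delicate variant of the second step.
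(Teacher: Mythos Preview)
The paper does not actually prove Theorem~\ref{lemma:local}. It is stated as a procedural decision rule, with the underlying justification deferred entirely to the cited prior work \cite{liao2021quick,xiao2023distribution}: the text immediately preceding the theorem says that \cite{liao2021quick} ``introduced a precise outage localization method by demonstrating the conditional independence of voltage increments between two disconnected buses,'' and the theorem itself just packages that observation into the two-threshold test, with $\delta_{\max}=0.5$ and $\delta_{\min}=0.1$ chosen empirically. No proof environment appears for this statement anywhere in the paper.

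Your proposal therefore supplies far more than the paper does. The structure you outline --- (i) the Gaussian graphical model from the linearized sensitivity map $\boldsymbol{M}_\branch$, (ii) the Schur-complement identification of $\rho_{ik}(\boldsymbol{\Sigma})=0$ with conditional independence, (iii) the bridge-cut argument that removing $\{i,k\}$ decouples $V_i$ and $V_k$ in a radial grid, and (iv) a concentration step to pass from $\postsigma$ to $\hpostsigma$ --- is exactly the content one would expect to find in \cite{liao2021quick}, and it is the right skeleton for a genuine proof. Your own caveats are accurate: the uniform lower bound $|\rho_{ik}^{-}|\ge c_0$ is topology- and conditioning-dependent, and in mesh grids the ``drops to exactly zero'' half fails, which is precisely why the paper treats $\delta_{\max}$ and $\delta_{\min}$ as empirically calibrated tolerances rather than as consequences of an analytic separation. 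So there is no discrepancy to flag --- you have written the argument the paper chose to outsource.
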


According to Theorem \ref{lemma:local}, we track the change of covariance matrices to localize the out-of-service branch. Specifically, an out-of-service branch between bus $i$ and bus $k$ can be identified if both of the following conditions are met simultaneously: (1) $|\rho_{ik}^{-}| > \delta_{\max}$ indicating the presence of a branch between buses $i$ and $k$ before the outage, and (2) $|\rho_{ik}^{+}| < \delta_{\min}$ indicating the absence of a branch between buses $i$ and $k$ after the outage.

{\bf Differential privacy.} To assess the level of privacy preservation, we follow the framework of differential privacy \cite{dwork2006calibrating}, which offers worst-case privacy guarantees. Specifically, an algorithm $\mathcal{M}:\mathbb{R}^{\dime}\to\mathbb{R}^\dime$ is $(\varepsilon,\delta)$-differentially private if, for any neighboring datasets $X$ and $X^\prime$ (differing in at most one element), and for every subset of possible outputs $\mathcal{S}$, the following inequality holds:
\begin{align}\label{eq:dp}
\P[\mathcal{M}(X)\in \mathcal{S}] \le \exp(\varepsilon)\P[\mathcal{M}(X^\prime)\in \mathcal{S}] + \delta.
\end{align}
In essence, this property ensures that a potential attacker observing the outcomes of the algorithm $\mathcal{M}$ cannot easily deduce whether a specific individual's information is present in the dataset. While the conventional technique for achieving differential privacy involves the introduction of Laplace noise \cite{cummings2018differentially} to raw data, the concept of Gaussian differential privacy \cite{dong2019gaussian}  extends differential privacy to encompass noises generated from a broader range of distributions.

\section{Privacy-aware Line Outage Detection with Boosted Performance}
\label{sec:method}

In the aforementioned outage identification procedure (\ref{eq:stop rule}), the increments of voltage magnitude data $\x^{1:N}$ are critical. However, such data may also be used to infer customer's sensitive information \cite{wood2003dynamic,mcdaniel2009security}, such as the household occupancy (see lower half of Fig. \ref{fig:bigpic}), e.g., when the house owner arrives or leaves home. To protect the raw voltage data of customers, at each time step $n$ when data $\x[n]$ is received, we apply a randomizing scheme to encrypt the raw data directly:
\begin{equation}\label{eq:addnoise}
    \nx[n] = \x[n] + \noise[n],
\end{equation}
where $\noise[n]\in\mathbb{R}^{\dime}$ is a random noise vector. The randomized approach stands out for its simplicity and effectiveness over other techniques like Homomorphic Encryption or Data Anonymization. Its introduction of systematic noise $\noise[n]$ not only facilitates differential privacy, offering a measurable level of privacy protection, but also enables the quantification of any impact on detection performance. This dual capability allows for a finely tuned balance between ensuring user privacy and maintaining the accuracy of outage detection efforts.

The noise $\noise[n]$ has to be sufficiently large to hide the characteristics of the raw data while not being too large to impact the detection performance. To establish the suitable level of embedded noise, we evaluate the privacy guarantee within the differential privacy framework in Section \ref{sec:privacy} and assess the corresponding degradation in detection performance in Section \ref{sec:degradation}. We show that, in general, the noise added to data makes it harder to distinguish whether the data comes from the distribution $g$ or $f$, leading to a prolonged detection delay. Integrating these analyses, we propose a new statistic in Section \ref{sec:statistic} (to replace (\ref{eq:statistic})) such that the new detection procedure is both privacy-preserving and has comparable detection performance as the optimal case with access to raw data.

For making the randomizing scheme (\ref{eq:addnoise}) satisfy the differential privacy, we generate noise from the same distribution (Gaussian) as raw data, i.e., $\noise[n]\sim\mathcal{N}(\boldsymbol{0},\mnoisesigma)$. The covariance matrix is designed to be diagonal, i.e., $\boldsymbol{D}_e=\text{diag}(\noisesigma,\cdots,\noisesigma)$ where variance $\noisesigma$ represents the noise level or amount of noise. A diagonal covariance indicates that each element in the noise vector is independent. In doing so, the scheme (\ref{eq:addnoise}) is equivalent to adding a random noise scalar to each dimension of the data vector, ensuring that each customer's raw data is encrypted before sending to any third party (see Fig. \ref{fig:addnoise}). Notice, unlike some works that add noise to the statistics (e.g., $\Lambda(\x^{1:N})$) \cite{cummings2018differentially,cummings2020privately} after raw data is collected, our approach ensures no direct exposure of the raw data.

There is another advantage of using a diagonal noise covariance $\boldsymbol{D}_e$, i.e., introducing independent noise into user data. In fact, this choice allows us to effectively differentiate line outages from other causes of voltage distribution changes by examining the voltage data's covariance matrix $\boldsymbol{\Sigma}$. As detailed in our previous work \cite{xiao2023distribution}, voltage increments between disconnected buses show conditional Independence. Specifically, if the branch connecting two buses becomes non-operational, the conditional correlation between these buses shifts from a non-zero value to zero, highlighting a unique pattern of line outages in the voltage data's covariance structure. More importantly, our privacy-preserving technique, when introducing independent noise to each bus's data, does not compromise this unique property, thus maintaining the ability to differentiate line outages effectively.

\begin{figure}[ht]
\centering
\vskip -0.15in
\includegraphics[width=0.9\linewidth]{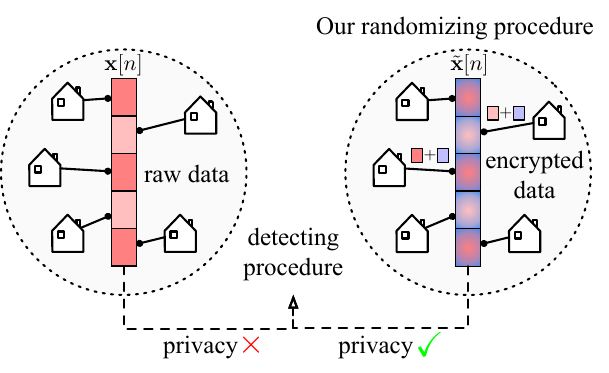}
\vskip -0.15in
\caption{The decentralized randomizing scheme (\ref{eq:addnoise}) to protect privacy of each customer $i$'s raw data $\x_i[n]$ in the data vector $\x[n]$.}
\vskip -0.2in
\label{fig:addnoise}
\end{figure}

\subsection{Differential Privacy Guarantee of the Randomizing Scheme}
\label{sec:privacy}
Applying the randomizing scheme (\ref{eq:addnoise}), the detection procedure will be performed on the encrypted data $\nx^{1:N}=\{\nx[1],\cdots,\nx[N]\}$ to find the outage (see Fig. \ref{fig:addnoise}). In this subsection, we quantify how much privacy is preserved w.r.t. the noise level $\noisesigma$. To achieve this, we prove that (\ref{eq:addnoise}) satisfies the classic $(\varepsilon,\delta)$-differential privacy mechanism \cite{dwork2006calibrating}. 

A differential privacy scheme indicates that by looking at the encrypted data $\nx[n]$, an adversary struggles to tell whether any piece of real data $\x_i[n]$ is included. The mathematical definition is given in (\ref{eq:dp}).
Since noise $\noise[n]$ is independent Gaussian to raw data $\x[n]$, the encrypted data $\nx[n]$ also follows Gaussian. It allows us to detour the proof of classic differential privacy by the tool of Gaussian differential privacy \cite{dong2019gaussian}. Specifically, a $G_{\mu}$-Gaussian differential privacy scheme implies the following: telling whether any piece of real data $\x_i[n]$ is present in the encrypted data $\nx[n]$ is more difficult than distinguishing between distributions $\mathcal{N}(0,1)$ and $\mathcal{N}(\mu,1)$. The difficulty is quantified using the trade-off function $T(\mathcal{N}(0,1), \mathcal{N}(\mu,1))$, which characterizes the balance between type I and type II errors in distinguishing these distributions \cite{dong2019gaussian}. This particular trade-off function is also referred to as $G_{\mu}$. In Lemma \ref{the:g-dp}, we show that our scheme (\ref{eq:addnoise}) is Gaussian differential private.

\begin{lemma}\label{the:g-dp}
The randomizing scheme (\ref{eq:addnoise}) is $G_{\frac{\sens}{\noisedev}}$-Gaussian differential private \cite{dong2019gaussian} where $\sens:= \sup_{\x[n], \x^\prime[n]} \|\x[n] - \x^\prime[n]\|$ is the sensitivity of raw data, and $\x[n], \x^\prime[n]$ only differs in exactly one element.
\end{lemma}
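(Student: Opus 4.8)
The plan is to recognize the scheme (\ref{eq:addnoise}) as an instance of the Gaussian mechanism and to invoke the characterization of its privacy profile through trade-off functions. Recall that a mechanism $\mathcal{M}$ is $G_\mu$-Gaussian differentially private precisely when, for every pair of neighboring inputs, the trade-off function between the two output distributions dominates $G_\mu := T(\mathcal{N}(0,1),\mathcal{N}(\mu,1))$. Under (\ref{eq:addnoise}), feeding in two neighboring records $\x[n]$ and $\x^\prime[n]$ (differing in exactly one coordinate) produces outputs distributed as $\mathcal{N}(\x[n],\mnoisesigma)$ and $\mathcal{N}(\x^\prime[n],\mnoisesigma)$, since the additive noise $\noise[n]$ is independent Gaussian. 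Thus the problem reduces to lower-bounding $T\big(\mathcal{N}(\x[n],\mnoisesigma),\,\mathcal{N}(\x^\prime[n],\mnoisesigma)\big)$ uniformly over neighboring pairs.

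The key step is to show that the trade-off function between two Gaussians sharing a common covariance depends on their means only through the induced Mahalanobis distance. Writing $\boldsymbol{u}=\x^\prime[n]-\x[n]$, the log-likelihood ratio between the two output densities, evaluated at an observation $\boldsymbol{z}$, is an affine function of the scalar $\boldsymbol{u}^\top\mnoisesigma^{-1}\boldsymbol{z}$; hence the likelihood ratio is monotone in this one-dimensional statistic, and by the Neyman--Pearson lemma the optimal (possibly randomized) tests are exactly its level sets. Under each hypothesis this statistic is univariate Gaussian with a common variance and a mean gap that, after standardization, equals $\|\boldsymbol{u}\|_{\mnoisesigma^{-1}}:=\sqrt{\boldsymbol{u}^\top\mnoisesigma^{-1}\boldsymbol{u}}$. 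Therefore $T\big(\mathcal{N}(\x[n],\mnoisesigma),\mathcal{N}(\x^\prime[n],\mnoisesigma)\big)=G_{\|\boldsymbol{u}\|_{\mnoisesigma^{-1}}}$, and because $\mnoisesigma=\diag(\noisesigma,\ldots,\noisesigma)$ this Mahalanobis distance is simply $\|\x^\prime[n]-\x[n]\|/\noisedev$.

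It remains to pass to the worst case over neighboring records. Since $G_\mu$ is pointwise non-increasing in $\mu$ (a larger mean gap makes the two Gaussians easier to separate, shrinking the trade-off curve), and $\|\x^\prime[n]-\x[n]\|\le\sens$ by the definition of the sensitivity, we obtain $G_{\|\x^\prime[n]-\x[n]\|/\noisedev}\ge G_{\sens/\noisedev}$ for every admissible pair. Taking the infimum over all neighboring $\x[n],\x^\prime[n]$ preserves the bound, so (\ref{eq:addnoise}) satisfies $G_{\sens/\noisedev}$-Gaussian differential privacy, and the classic $(\varepsilon,\delta)$ guarantee of (\ref{eq:dp}) then follows from the conversion between the two notions given in \cite{dong2019gaussian}.

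I expect the main obstacle to be the careful justification of the Mahalanobis-distance reduction — specifically, verifying that the one-dimensional projection $\boldsymbol{u}^\top\mnoisesigma^{-1}\boldsymbol{z}$ is a sufficient statistic for the binary testing problem and that no randomized rejection rule can improve on the projected Neyman--Pearson test, so that the trade-off function genuinely collapses to $G_{\|\boldsymbol{u}\|_{\mnoisesigma^{-1}}}$. This is a standard fact about the multivariate Gaussian location model and can be cited from \cite{dong2019gaussian}; the remaining arguments are routine bookkeeping with the monotonicity of $G_\mu$ and the definition of $\sens$.
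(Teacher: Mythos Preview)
Your proposal is correct and follows essentially the same route as the paper: identify the outputs on neighboring inputs as $\mathcal{N}(\x[n],\mnoisesigma)$ and $\mathcal{N}(\x^\prime[n],\mnoisesigma)$, reduce the trade-off function to $G_{\|\x[n]-\x^\prime[n]\|/\noisedev}$, and then use the sensitivity bound together with the monotonicity of $G_\mu$ to obtain $G_{\sens/\noisedev}$. You supply more justification for the Mahalanobis-distance reduction than the paper does (which simply asserts $T(\mathcal{N}(\x[n],\mnoisesigma),\mathcal{N}(\x^\prime[n],\mnoisesigma))=G_{\|\x[n]-\x^\prime[n]\|/\noisedev}$), and your closing remark on the $(\varepsilon,\delta)$ conversion is actually the content of the paper's subsequent Corollary rather than the lemma itself.
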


\begin{proof}
The encrypted data $\nx[n]$ and its neighboring data $\nx^{\prime}[n]$ (i.e., they differ in exactly one element) both follow Gaussian distributions as 
$\nx[n]\sim \mathcal{N}(\x[n],\boldsymbol{D}_e)$ and $\nx^{\prime}[n]\sim \mathcal{N}(\x^\prime[n],\boldsymbol{D}_e)$. Then, we have
\begin{align}
    T\left(\nx[n], \nx^\prime[n]\right) &= T(\mathcal{N}(\x[n],\boldsymbol{D}_e), \mathcal{N}(\x^{\prime}[n],\boldsymbol{D}_e))\nonumber\\
    &= G_{\|\x[n] - \x^\prime[n]\|/\noisedev} \ge G_{\frac{\sens}{\noisedev}}, \label{eq:G-dp}
\end{align}
where $T(\nx[n], \nx^\prime[n])$ is defined as the trade-off function between type I and II errors in differentiating data $\nx[n]$ and $\nx^\prime[n]$. The inequality is due to the definition of sensitivity, i.e., $\|(\x[n] - \x^\prime[n])/\noisedev\|\le \frac{\sens}{\noisedev}$. 
\end{proof}

Given the foundation of Gaussian differential privacy, we are ready to demonstrate that our scheme (\ref{eq:addnoise}) also adheres to the classic $(\varepsilon,\delta)$-differential privacy \cite{dwork2006calibrating}.

\begin{corollary}\label{co:dp}
Provided the $G_{\frac{\sens}{\noisedev}}$-Gaussian differential privacy, (\ref{eq:addnoise}) satisfies the $(\varepsilon,\delta(\varepsilon))$-differential privacy \cite{dong2019gaussian} where
$$
\delta(\varepsilon) = \Phi(-\frac{\varepsilon\noisedev}{\sens}+\frac{\sens}{2\noisedev}) - e^{\varepsilon}\Phi(-\frac{\varepsilon\noisedev}{\sens}-\frac{\sens}{2\noisedev}),
$$
and $\Phi$ is the CDF of the unit normal distribution.
\end{corollary}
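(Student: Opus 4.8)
The plan is to turn Lemma~\ref{the:g-dp} into an $(\varepsilon,\delta)$-statement by invoking the exact correspondence between trade-off functions and $(\varepsilon,\delta)$-differential privacy developed in \cite{dong2019gaussian}. Recall that a mechanism $\mathcal{M}$ is $(\varepsilon,\delta)$-differentially private exactly when, for every pair of neighboring inputs, the trade-off function $T(\mathcal{M}(X),\mathcal{M}(X'))$ dominates pointwise the piecewise-linear ``privacy frontier''
\begin{equation*}
f_{\varepsilon,\delta}(\alpha)=\max\bigl\{0,\ 1-\delta-e^{\varepsilon}\alpha,\ e^{-\varepsilon}(1-\delta-\alpha)\bigr\},\qquad \alpha\in[0,1].
\end{equation*}
By Lemma~\ref{the:g-dp}, the randomizing scheme (\ref{eq:addnoise}) satisfies $T(\nx[n],\nx^\prime[n])\ge G_{\mu}$ with $\mu=\sens/\noisedev$, so it is $(\varepsilon,\delta)$-DP as soon as $G_{\mu}(\alpha)\ge f_{\varepsilon,\delta}(\alpha)$ on $[0,1]$. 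Hence I would characterize $\delta(\varepsilon)$ as the smallest $\delta\ge 0$ for which this domination holds and then evaluate it in closed form.

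For the evaluation I would use the explicit form $G_{\mu}(\alpha)=\Phi\bigl(\Phi^{-1}(1-\alpha)-\mu\bigr)$, which is convex, strictly decreasing, and symmetric about the anti-diagonal (so $G_{\mu}=G_{\mu}^{-1}$). Because of convexity and this symmetry, the binding constraint in $G_{\mu}\ge f_{\varepsilon,\delta}$ is the linear piece of slope $-e^{\varepsilon}$, and the extremal $\delta$ is the one making $G_{\mu}$ \emph{tangent} to the line $\alpha\mapsto 1-\delta-e^{\varepsilon}\alpha$. Differentiating with the substitution $u=\Phi^{-1}(1-\alpha)$ gives $G_{\mu}'(\alpha)=-\phi(u-\mu)/\phi(u)=-\exp\!\bigl(\mu u-\mu^{2}/2\bigr)$; equating this to $-e^{\varepsilon}$ yields $u^{\ast}=\varepsilon/\mu+\mu/2$, hence the contact point $\alpha^{\ast}=1-\Phi(u^{\ast})=\Phi(-\varepsilon/\mu-\mu/2)$ with value $G_{\mu}(\alpha^{\ast})=\Phi(u^{\ast}-\mu)=\Phi(\varepsilon/\mu-\mu/2)$.

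Finally, tangency forces the line's intercept at $\alpha=0$ to equal $G_{\mu}(\alpha^{\ast})+e^{\varepsilon}\alpha^{\ast}$, i.e. $1-\delta=G_{\mu}(\alpha^{\ast})+e^{\varepsilon}\alpha^{\ast}$, so
\begin{equation*}
\delta(\varepsilon)=1-\Phi\!\Bigl(\tfrac{\varepsilon}{\mu}-\tfrac{\mu}{2}\Bigr)-e^{\varepsilon}\Phi\!\Bigl(-\tfrac{\varepsilon}{\mu}-\tfrac{\mu}{2}\Bigr)=\Phi\!\Bigl(-\tfrac{\varepsilon}{\mu}+\tfrac{\mu}{2}\Bigr)-e^{\varepsilon}\Phi\!\Bigl(-\tfrac{\varepsilon}{\mu}-\tfrac{\mu}{2}\Bigr),
\end{equation*}
using $1-\Phi(t)=\Phi(-t)$; substituting $\mu=\sens/\noisedev$ (so that $\varepsilon/\mu=\varepsilon\noisedev/\sens$ and $\mu/2=\sens/(2\noisedev)$) reproduces exactly the claimed $\delta(\varepsilon)$. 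I expect the main obstacle to be the justification used in the second paragraph: showing rigorously that enforcing tangency against the single slope-$(-e^{\varepsilon})$ piece is sufficient, i.e. that at the optimal $\delta$ the curve $G_{\mu}$ touches but does not cross $f_{\varepsilon,\delta}$ and that the remaining active piece $e^{-\varepsilon}(1-\delta-\alpha)$ is then automatically dominated. This is precisely where the convexity of $G_{\mu}$ and its anti-diagonal symmetry must be used carefully, following the primal--dual argument of \cite{dong2019gaussian}; the rest is routine calculus with the Gaussian CDF.
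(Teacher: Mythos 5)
Your proposal is correct and follows the same route as the paper: combine Lemma~\ref{the:g-dp} with the duality between trade-off functions and $(\varepsilon,\delta)$-differential privacy from \cite{dong2019gaussian}. The paper simply cites that conversion (it is Corollary~2.13 of \cite{dong2019gaussian}) without re-deriving it, whereas you reproduce the tangency calculation explicitly and correctly; the step you flag as delicate (that matching the slope-$(-e^{\varepsilon})$ piece suffices) is exactly what the symmetry of $G_{\mu}$ and of $f_{\varepsilon,\delta}$ about the anti-diagonal resolves, so no gap remains.
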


Satisfying the $(\varepsilon,\delta(\varepsilon))$-differential privacy in Corollary \ref{co:dp}, our proposed scheme (\ref{eq:addnoise}) ensures that an adversary can not easily determine if the data he observes is real, thus preserving the privacy of raw data. Moreover, we can control the amount of noise to achieve any desired level of privacy guarantee.

In fact, Lemma \ref{the:g-dp} and Corollary \ref{co:dp} reveal that the degree of differential privacy is directly related to the noise variance $\noisesigma$: larger noise results in enhanced privacy protection. The sensitivity $\sens$ is determined by the distribution system and can be approximated using domain expertise. For instance, in power grid analysis, the sensitivity of voltage data can be computed based on its standard operational range (ranging from $0\ p.u.$ to $1.1\ p.u.$).

\vspace{-1em}
\subsection{Quantification of Detection Performance Degradation}
\label{sec:degradation}
While (\ref{eq:addnoise}) enhances privacy protection, it may degrade the ability to detect line outages, potentially leading to increased detection delays and a higher false alarm rate. Therefore, it is crucial to analyze the extent to which detection performance is compromised when utilizing the encrypted data $\nx^{1:N}$. Only after completing this analysis can we devise a new solution to mitigate the degradation.

To study the performance degradation, we first note that the encrypted data $\nx[n]$ follows Gaussian distribution due to our choice of independent Gaussian noise for the raw data. Specifically, $\nx[n]$ follows $g_e\sim\mathcal{N}(\premu,\presigma+\mnoisesigma)$ before the outage ($n<\lambda$) and follows $f_e\sim\mathcal{N}(\postmu,\postsigma+\mnoisesigma)$ after the outage ($n\ge\lambda$). We use the notation $g_e$ and $f_e$ to denote the ``encrypted'' distributions, which are the results of introducing independent noise to distributions $g$ and $f$, respectively. 

Having defined $g_e$ and $f_e$, we can now rigorously measure the performance degradation. In Theorem \ref{the:noiseimpact}, we demonstrate that the ``distance'' between $g_e$ and $f_e$ is smaller than that between $g$ and $f$ by evaluating their Kullback-Leibler (KL) divergence. The ``closer'' the distributions are, the more challenging it is to distinguish them in the outage detection procedure, thus leading to a prolonged detection delay. Intuitively, if the noise term is infinitely large ($\noisesigma\to\infty$), the distributions $g_e$ and $f_e$ will be dominated by the same noise distribution and become impossible to distinguish.

\begin{theorem}\label{the:noiseimpact}
The randomizing scheme (\ref{eq:addnoise}) diminishes the KL divergence between pre- and post-outage distributions:
\begin{align}
    &\text{KL}_{\Delta} := D_{\text{KL}}(f\|g)-D_{\text{KL}}(f_e\|g_e) \ge 0, \label{eq:klreduction}\\
    &\text{KL}_{\Delta} \le \mathcal{O}(\noisesigma)(\|\premu-\postmu\|_2^2  + \frac{\left(\text{tr}(\postsigma)-\text{tr}(\presigma)\right)^2}{\text{tr}(\postsigma)}). \label{eq:upperbound}
\end{align}
\end{theorem}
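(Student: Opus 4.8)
The plan is to work entirely from the closed-form Gaussian KL divergence. Since the encryption noise is zero-mean, only the covariances change, so
\[
2D_{\text{KL}}(f\|g)=\tr(\presigma^{-1}\postsigma)-\dime+(\premu-\postmu)^\top\presigma^{-1}(\premu-\postmu)+\mlog\frac{\det\presigma}{\det\postsigma},
\]
and $2D_{\text{KL}}(f_e\|g_e)$ is the same expression with $\presigma\mapsto\presigma+\mnoisesigma$ and $\postsigma\mapsto\postsigma+\mnoisesigma$. The key device I would use is to define $F(t):=2D_{\text{KL}}\big(\mathcal N(\postmu,\postsigma+tI)\,\|\,\mathcal N(\premu,\presigma+tI)\big)$, so that $2\,\text{KL}_\Delta=F(0)-F(\noisesigma)=\int_0^{\noisesigma}\big(-F'(t)\big)\,dt$. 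Using $\tfrac{d}{dt}(M+tI)^{-1}=-(M+tI)^{-2}$, $\tfrac{d}{dt}\mlog\det(M+tI)=\tr((M+tI)^{-1})$, and the resolvent identity $X^{-1}-Y^{-1}=X^{-1}(Y-X)Y^{-1}$, a direct computation gives, with $A_t:=\presigma+tI$, $B_t:=\postsigma+tI$, $C:=\postsigma-\presigma$,
\[
-F'(t)=\tr\!\big(B_t^{-1}CA_t^{-2}C\big)+(\premu-\postmu)^\top A_t^{-2}(\premu-\postmu).
\]

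The nonnegativity (\ref{eq:klreduction}) is then immediate: $CA_t^{-2}C$ and $B_t^{-1}$ are positive semidefinite, so $\tr(B_t^{-1}CA_t^{-2}C)\ge 0$, and the mean term is a quadratic form in a positive definite matrix; hence $-F'(t)\ge 0$ on $[0,\infty)$ and $\text{KL}_\Delta\ge 0$. Equivalently, this is just the data-processing inequality for KL divergence under the Markov kernel $\x\mapsto\x+\noise$, which carries $f\mapsto f_e$ and $g\mapsto g_e$.

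For the upper bound (\ref{eq:upperbound}) I would bound the two pieces of $\int_0^{\noisesigma}(-F'(t))\,dt$ separately. The mean part is at most $\noisesigma\,\lambda_{\min}(\presigma)^{-2}\|\premu-\postmu\|_2^2=\mathcal O(\noisesigma)\|\premu-\postmu\|_2^2$, since $A_t^{-2}$ is dominated by $\presigma^{-2}$ in the positive semidefinite order. For the covariance part I would bound $A_t^{-2}$ by $\presigma^{-2}$ and $B_t^{-1}$ by $\postsigma^{-1}$, pull the resulting factor $\noisesigma$ out of the $t$-integral, and then pass to the eigenbasis (exactly when $\presigma,\postsigma$ commute, and in general after a spectral reduction of the trace), so the summand becomes $(\nu_i^{(1)}-\nu_i^{(0)})^2/\big(\nu_i^{(1)}(\nu_i^{(0)})^2\big)$ with $\nu_i^{(0)},\nu_i^{(1)}$ the eigenvalues of $\presigma,\postsigma$. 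Absorbing $(\nu_i^{(0)})^{-2}\le\lambda_{\min}(\presigma)^{-2}$ into the constant leaves $\sum_i(\nu_i^{(1)}-\nu_i^{(0)})^2/\nu_i^{(1)}$, which a Cauchy--Schwarz (Engel-form) estimate relates to $\big(\sum_i(\nu_i^{(1)}-\nu_i^{(0)})\big)^2/\sum_i\nu_i^{(1)}=(\tr\postsigma-\tr\presigma)^2/\tr\postsigma$, yielding the stated bound.

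The step I expect to be the main obstacle is this last one. A naive spectral estimate of $\tr(B_t^{-1}CA_t^{-2}C)$ produces $\|\postsigma-\presigma\|_F^2$, which in general exceeds $(\tr\postsigma-\tr\presigma)^2/\dime$, so the inequality would point the wrong way; reaching precisely $(\tr(\postsigma)-\tr(\presigma))^2/\tr(\postsigma)$ requires the Cauchy--Schwarz step to be near-tight, i.e., that the outage-induced perturbation $\postsigma-\presigma$ be well aligned with $\postsigma$ (as is physically the case for a single line outage in the voltage-increment model), or else a spectral condition-number factor must be folded into the implicit constant of $\mathcal O(\noisesigma)$. A minor additional care point is the non-commuting case, where the operator-norm dominations must be inserted in the correct order inside the trace since $\presigma$ and $\postsigma$ need not be simultaneously diagonalizable.
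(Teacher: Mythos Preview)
Your approach is correct and genuinely different from the paper's. For the nonnegativity part, the paper writes out $2\,\text{KL}_\Delta$ explicitly, drops the mean term using $(\presigma)^{-1}-(\presigma+\mnoisesigma)^{-1}\succeq 0$, and then argues via the eigenvalues $\nu_i$ of $(\presigma)^{-1}\postsigma$ and $\xi_i$ of $(\presigma+\mnoisesigma)^{-1}(\postsigma+\mnoisesigma)$ that $|\xi_i-1|\le|\nu_i-1|$ together with the shape of $x\mapsto x-\log x$ near $x=1$ forces each summand $(\nu_i-\log\nu_i)-(\xi_i-\log\xi_i)\ge 0$. Your interpolation $F(t)$ with $-F'(t)=\tr(B_t^{-1}CA_t^{-2}C)+(\premu-\postmu)^\top A_t^{-2}(\premu-\postmu)\ge 0$ (or the one-line data-processing argument) is cleaner and sidesteps the eigenvalue matching, which in the paper is only fully transparent when $\presigma$ and $\postsigma$ commute.

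For the upper bound the two routes essentially converge, and your honest identification of the obstacle is on target. The paper does \emph{not} actually reach the trace form $(\tr\postsigma-\tr\presigma)^2/\tr\postsigma$ either: its final displayed inequality is
\[
\text{KL}_\Delta \;\le\; \frac{\noisesigma}{2(\eigenpre^{\min})^2}\Big(\|\premu-\postmu\|_2^2 + M\,\frac{(\eigenpost^{\max}-\eigenpre^{\min})^2}{\eigenpost^{\max}}\Big),
\]
with $\eigenpre^{\min}$ the smallest eigenvalue of $\presigma$, $\eigenpost^{\max}$ the largest eigenvalue of $\postsigma$, and $M$ the dimension. The passage from this eigenvalue-extreme bound to the trace expression in the theorem statement is left to the $\mathcal{O}(\noisesigma)$ constant, exactly the ``fold a spectral condition-number factor into the implicit constant'' move you anticipated. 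So your proposed resolution of the Engel-form direction problem is precisely what the paper is implicitly doing; neither argument delivers the trace form with a dimension- or conditioning-free constant.
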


\begin{proof}
For showing $\text{KL}_{\Delta}\ge 0$, we have
\begin{align*}
    &2\text{KL}_{\Delta}=\frac12(\premu-\postmu)^T[(\presigma)^{-1}-(\presigma^e)^{-1}](\premu-\postmu)\\
    &+\frac12\log\frac{|\presigma|}{|\postsigma|}\frac{|\postsigma^e|}{|\presigma^e|} + \frac12\text{tr}\{(\presigma)^{-1}(\postsigma) - (\presigma^e)^{-1}(\postsigma^e)\}\\
    &\ge \frac12{\displaystyle\sum}_{i=1}^{\dime} \left[ (\nu_i - \log \nu_i) - (\xi_i - \log \xi_i) \right],
\end{align*}
where $\boldsymbol{\Sigma}_i^e=\boldsymbol{\Sigma}_i+\mnoisesigma$ for $i=0,1$. $\nu_1,\cdots,\nu_\dime$ and $\xi_1,\cdots,\xi_\dime$ are the eigenvalues of $(\presigma)^{-1}\postsigma$ and $(\presigma^e)^{-1}\postsigma^e$, respectively. The inequality is due to that matrix $(\presigma)^{-1}-(\presigma^e)^{-1}$ is positive semi-definite. Moreover, since $|\xi_i-1| \le |\nu_i-1|, \forall i=1,\cdots,\dime$, we finally obtain $\text{KL}_{\Delta}\ge0$. Aside from the lower bound as zero, an upper bound of $\text{KL}_{\Delta}$ is further derived as
    \begin{align*}
    &\text{KL}_{\Delta} \le \frac12  \|\premu-\postmu\|_2^2(\frac{1}{\eigenpre^{\min}} - \frac{1}{\eigenpre^{\min}+\noisesigma} ) \\
    &+\frac{M}{2} (\frac{\eigenpost^{\max}}{\eigenpre^{\min}} - \log\frac{\eigenpost^{\max}}{\eigenpre^{\min}} +\log\frac{\eigenpost^{\max}+\noisesigma}{\eigenpre^{\min}+\noisesigma} -\frac{\eigenpost^{\max}+\noisesigma}{\eigenpre^{\min}+\noisesigma})\\
    &\le \frac{\noisesigma}{2(\eigenpre^{\min})^2}(\|\premu-\postmu\|_2^2 + M\frac{(\eigenpost^{\max} - \eigenpre^{\min})^2}{\eigenpost^{\max}}),
    \end{align*}
where $\eigenpre^{\min}$ is the smallest eigenvalue of $\presigma$, and $\eigenpost^{\max}$ is the largest eigenvalue of $\postsigma$. 
\end{proof}

As a corollary of $D_{\text{KL}}(f_e\|g_e)\le D_{\text{KL}}(f\|g)$ in Theorem \ref{the:noiseimpact}, the asymptotic lower bound of average detection delay in (\ref{eq:lowerbound}) is increased when the randomizing scheme is applied:
$$
\ADD(\Lambda,f_e,g_e) \ge \ADD(\Lambda,f,g),
$$
resulting in a prolonged detection delay of finding the outage time given encrypted data $\nx^{1:N}$. Theorem \ref{the:noiseimpact} not only indicates a strict performance degradation but also infers the magnitude of this degradation by deriving the upper bound of $\text{KL}_{\Delta}$. That is, we know approximately how much extra delay is brought w.r.t. the noise variance $\noisesigma$.

To illustrate the prolongation of detection delay, we present Fig. \ref{fig:statistic}, comparing two scenarios: the application of the statistic (\ref{eq:statistic}) to raw data $\Lambda(\x^{1:N})$ (red curve) and its application to encrypted data $\Lambda(\nx^{1:N})$ (blue curve). Due to the KL divergence reduction established in Theorem \ref{the:noiseimpact}, $\Lambda(\nx^{1:N})$ is typically smaller than $\Lambda(\x^{1:N})$ (we will show this claim later in the paper), especially after the outage occurrence. This inequality has two intuitive consequences. Firstly, it reduces the likelihood of triggering a false alarm when detecting the outage time using encrypted data, i.e., $\FAR(\tilde{\Lambda},f_e,g_e) \le \FAR(\Lambda,f,g)\le\alpha$. Secondly, encrypted data leads to a prolonged detection delay, i.e., the performance degradation.

To address the performance degradation, as suggested by Fig. \ref{fig:statistic}, a logical approach is to design a new detection statistic (represented by a green curve) to process the encrypted data. The new detection procedure is expected to maintain a comparable detection delay to the optimal scenario with access to raw data and still restrict the false alarm rate below $\alpha$.

\begin{figure}[ht]
\centering
\vskip -0.1in
\includegraphics[width=0.9\linewidth]{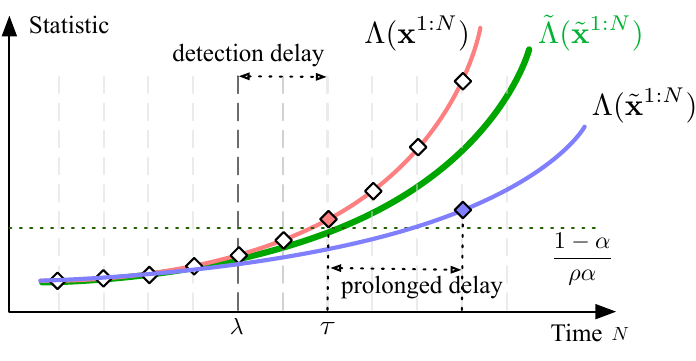}
\vskip -0.15in
\caption{Outages are reported when the calculated statistic surpasses the threshold $\frac{1-\alpha}{\rho\alpha}$. See Table \ref{tab:statistic} for a summary of these statistics.}
\vskip -0.1in
\label{fig:statistic}
\end{figure}

\vspace{-1em}
\subsection{A New Statistic to Boost the Detection Performance}
\label{sec:statistic}

In this subsection, we formally introduce a noise-mitigation technique to achieve detection performance comparable to the optimal scenario with access to raw data. We term it ``noise-mitigation'' since the technique essentially alleviates the performance impact resulting from the privacy-protective noise. To achieve this, we design a new statistic $\tilde{\Lambda}$ to process the encrypted data $\nx^{1:N}$. The new statistic aims to offer an approximation of the optimal statistic $\Lambda(\x^{1:N})$ (as depicted by the green curve in Fig. \ref{fig:statistic}), even when raw data is not available. We refer to the statistic $\Lambda(\x^{1:N})$ as ``optimal'' due to its demonstrated optimal detection performance when raw data is available (see Theorem \ref{theorem:Bayesian}). It's also important to note that this optimal statistic doesn't incorporate privacy protection. To prevent any ambiguity with these statistics, we present Table \ref{tab:statistic} for a comprehensive summary of the detection statistics used in this paper, along with their relevant attributes.

\begin{table}[ht]
\centering
\vskip -0.15in
\caption{Summarize of detection statistics}\label{tab:statistic}
\vskip -0.1in
\begin{tabular}{lccc}
    \toprule
    Statistic & Calculation & Privacy & Detection Performance \\
    \midrule
    $\Lambda(\x^{1:N})$ & (\ref{eq:statistic})  & $\times$  & optimal\\
    $\Lambda(\nx^{1:N})$ & (\ref{eq:statistic})  & $\checkmark$ & compromised\\
    $\tilde{\Lambda}(\nx^{1:N})$ & (\ref{eq:post_new}) &  $\checkmark$ & sub-optimal\\
    $\tilde{\Lambda}_{\gamma}(\nx^{1:N})$ & (\ref{eq:post_biased}) & $\checkmark$ & optimal \\
    \bottomrule 
\end{tabular}
\vskip -0.1in
\end{table}

For designing a new statistic $\tilde{\Lambda}$ that approximates $\Lambda$, we leverage the following insights. While the noise is generated randomly, its pattern, specifically the distribution parameters $\noisesigma$, are known to utility operators. This insight prompts us to compute the expectation of noise-related terms in the statistic $\Lambda(\nx^{1:N})$. By replacing these terms with their respective expectations, we can provide an unbiased estimation of $\Lambda(\nx^{1:N})$. Following this rationale, the new design for the statistic $\tilde{\Lambda}(\nx^{1:N})$ is presented in (\ref{eq:post_new}):
\begin{align}\label{eq:post_new}
\tilde{\Lambda}(\nx^{1:N}) = \sum_{k=1}^N \pi_N^k \prod_{n=k}^{N}\frac{\sqrt{|\boldsymbol{\Sigma}_0|}\exp\left(\beta_1[n]\right)}{\sqrt{|\boldsymbol{\Sigma}_1|}\exp\left(\beta_0[n]\right)},
\end{align}
where $\beta_i[n]:=-\frac{1}{2}(\nx[n]-\boldsymbol{\mu}_i)^T(\boldsymbol{\Sigma}_i)^{-1}(\nx[n]-\boldsymbol{\mu}_i) +\frac{1}{2} \noisedev\cdot\text{tr}(\boldsymbol{\Sigma}_i^{-1})$ for $i=0,1$. We note that $\beta_i[n]$ is an unbiased estimation of the corresponding term in the optimal statistic $\Lambda(\x^{1:N})$, i.e.,
$
\mathbb{E}_{\noise\sim\mathcal{N}(\boldsymbol{0},\mnoisesigma)} \beta_i[n] = -\frac{1}{2}(\x[n]-\boldsymbol{\mu}_i)^T(\boldsymbol{\Sigma}_i)^{-1}(\x[n]-\boldsymbol{\mu}_i)
$. By the unbiased design, the proposed statistic $\tilde{\Lambda}(\nx^{1:N})$ serves as the desired approximation of the optimal statistic $\Lambda(\x^{1:N})$. This effect is shown in Fig. \ref{fig:statistic} and proved in Lemma \ref{lemma:statistic}.

\begin{lemma}\label{lemma:statistic}
    The proposed statistic $\tilde{\Lambda}$ in (\ref{eq:post_new}) satisfies 
    \begin{align}\label{eq:statisticinequality}
        \Lambda(\nx^{1:N}) \le \tilde{\Lambda}(\nx^{1:N}) \le \Lambda(\x^{1:N}), \quad N\ge \lambda.
    \end{align}
\end{lemma}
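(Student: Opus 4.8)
My plan is to exploit the common algebraic shape of the three statistics and reduce both inequalities to comparisons of single-step likelihood-ratio factors. Each of $\Lambda(\x^{1:N})$, $\Lambda(\nx^{1:N})$ and $\tilde{\Lambda}(\nx^{1:N})$ has the form $\sum_{k=1}^N\pi_N^k\prod_{n=k}^N r_n$ with nonnegative weights $\pi_N^k$ and strictly positive one-step factors $r_n$; since such a sum is monotone in each summand and a product of positive reals is monotone in each factor, it suffices to compare the one-step factors. I would first write the Gaussian densities out, so that the one-step factor of $\Lambda(\cdot)$ at a point $\mathbf{z}$ is $\sqrt{|\presigma|/|\postsigma|}\,\exp(q_1(\mathbf{z})-q_0(\mathbf{z}))$ with $q_i(\mathbf{z}):=-\tfrac12(\mathbf{z}-\boldsymbol{\mu}_i)^\top\boldsymbol{\Sigma}_i^{-1}(\mathbf{z}-\boldsymbol{\mu}_i)$, and the one-step factor of $\tilde{\Lambda}$ is $\sqrt{|\presigma|/|\postsigma|}\,\exp(\beta_1[n]-\beta_0[n])$.

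For the left inequality, note that $\beta_i[n]$ equals $q_i(\nx[n])$ plus the data-independent bias-correction constant $\tfrac12\noisesigma\,\tr(\boldsymbol{\Sigma}_i^{-1})$. Hence the one-step factor of $\tilde{\Lambda}$ is the one-step factor of $\Lambda(\nx^{1:N})$ times the fixed scalar $\kappa:=\exp\!\big(\tfrac12\noisesigma(\tr(\postsigma^{-1})-\tr(\presigma^{-1}))\big)$, so $\tilde{\Lambda}(\nx^{1:N})=\sum_k\pi_N^k\,\kappa^{\,N-k+1}\prod_{n=k}^N f(\nx[n])/g(\nx[n])$. The claim $\Lambda(\nx^{1:N})\le\tilde{\Lambda}(\nx^{1:N})$ then reduces to the scalar inequality $\kappa\ge1$, i.e.\ $\tr(\postsigma^{-1})\ge\tr(\presigma^{-1})$, which I would establish from the grid model of \cite{xiao2023distribution}, in which a line outage only removes edges and therefore cannot increase the variability of the voltage increments. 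With $\kappa\ge1$ in hand, every one-step factor of $\tilde{\Lambda}$ dominates the corresponding factor of $\Lambda(\nx^{1:N})$, hence every $k$-th product does, and summing against $\pi_N^k\ge0$ closes this half.

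For the right inequality I would use a Jensen's-inequality argument, as foreshadowed in the introduction. Writing $\nx[n]=\x[n]+\noise[n]$, decompose the $k$-th product of $\tilde{\Lambda}$ as $L_k\,e^{Z_k}$, where $L_k:=\prod_{n=k}^N f(\x[n])/g(\x[n])$ is exactly the $k$-th term of $\Lambda(\x^{1:N})$ and $Z_k:=\sum_{n=k}^N(\beta_1[n]-\beta_0[n]-q_1(\x[n])+q_0(\x[n]))$ is a sum of independent increments each of zero mean, by the unbiasedness identity $\mathbb{E}_{\noise}[\beta_i[n]]=q_i(\x[n])$ recorded just after (\ref{eq:post_new}). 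Thus $\tilde{\Lambda}$ reproduces $\Lambda(\x^{1:N})$ term by term on the log-scale, $\mathbb{E}_{\noise}[\log(L_k e^{Z_k})]=\log L_k$, and the whole content of the bound is to turn this ``centering at the optimal statistic'' into the stated ordering — and this is the step I expect to be the main obstacle. A naive Jensen bound on the convex map $z\mapsto e^z$ controls $\mathbb{E}_{\noise}[e^{Z_k}]$ only from below, so the inequality cannot be read unconditionally in expectation; it must be argued in the post-outage regime $N\ge\lambda$, where the genuine change-induced drift of $\log L_k$, of order $(N-\lambda)\,D_{\text{KL}}(f\|g)$, overwhelms the zero-mean fluctuation $Z_k$, so that $\tilde{\Lambda}(\nx^{1:N})$ settles below $\Lambda(\x^{1:N})$ once the change has occurred, with the residual gap governed by $\mathrm{Var}(Z_k)$ — exactly the quantity that the subsequent statistic $\tilde{\Lambda}_\gamma$ is engineered to shrink. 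Making this ``drift dominates fluctuation'' step quantitative, and checking that the ordering survives the sum over $k$ against the mixing weights, is where the real work sits.
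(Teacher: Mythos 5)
Your reduction of both inequalities to comparisons of one-step factors is sound, and your algebraic identity $\tilde{\Lambda}(\nx^{1:N})=\sum_k\pi_N^k\,\kappa^{N-k+1}\prod_{n=k}^N f(\nx[n])/g(\nx[n])$ with $\kappa=\exp\bigl(\tfrac12\noisesigma(\tr(\postsigma^{-1})-\tr(\presigma^{-1}))\bigr)$ is correct, but the proposal has two genuine gaps. For the left inequality you read $\Lambda(\nx^{1:N})$ as formula (\ref{eq:statistic}) with the \emph{original} densities $f,g$ evaluated at the noisy data, which is what lets you peel off the constant $\kappa$. The paper's proof instead takes the one-step factor of $\Lambda(\nx^{1:N})$ to be $f_e(\nx[n])/g_e(\nx[n])$, the ratio of the \emph{encrypted} densities (consistent with the notation $\FAR(\Lambda,f_e,g_e)$ and the KL comparison of Theorem \ref{the:noiseimpact}); for that version your constant-factor argument does not apply, and the paper closes the comparison by a different route (diagonal covariances $s_0I,s_1I$ and the post-outage assertion $a_0\gg a_1$). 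Even within your reading, the pivotal claim $\tr(\postsigma^{-1})\ge\tr(\presigma^{-1})$ is justified only by a physical heuristic about outages reducing variability, not proven, so this half is not closed.

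More importantly, you do not prove the right inequality. You correctly observe that Jensen applied to $e^{Z_k}$ with $\mathbb{E}_{\noise}[Z_k]=0$ gives $\mathbb{E}_{\noise}[e^{Z_k}]\ge 1$, i.e. it pushes the $k$-th product of $\tilde{\Lambda}$ \emph{above} $L_k$ in expectation --- the wrong direction for $\tilde{\Lambda}\le\Lambda(\x^{1:N})$ --- and you explicitly defer the ``drift dominates fluctuation'' completion. That deferred step is where the entire content of the bound lives, so what you have is an outline, not a proof. Your diagnosis does, however, expose a real weakness in the paper's own argument: the paper derives $\mathbb{E}_{\noise}[(\#)/(\ast)]\ge 1$ and concludes ``a higher likelihood that $(\#)>(\ast)$,'' but the identical Jensen step applied to the reciprocal yields $\mathbb{E}_{\noise}[(\ast)/(\#)]\ge 1$ as well, so the argument is symmetric in the two quantities and cannot by itself establish the ordering; the claimed inequality is at best a statement about typical noise realizations rather than a deterministic one. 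If you want to salvage this half rigorously, you would need either a high-probability concentration bound on $Z_k$ against the deterministic growth of $\log L_k$ for $N\ge\lambda$, or a restatement of the lemma in expectation or with high probability.
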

\begin{proof}
For showing $\Lambda(\nx^{1:N}) \le \tilde{\Lambda}(\nx^{1:N})$, it suffices to show $\frac{f_e(\nx[n])}{g_e(\nx[n])} \le |\postsigma|^{\frac{1}{2}}/|\presigma|^{\frac{1}{2}}\exp(\beta_1[n]-\beta_0[n])$, where these two terms are denoted as $(\#^e)$ and $(\ast)$. In fact, we have
\begin{align*}
\log (\#^e) &= \frac{\dime}{2}\log\frac{s_0+\noisesigma}{s_1+\noisesigma} + (\frac{a_0}{s_0+\noisesigma} - \frac{a_1}{s_1+\noisesigma}),\\
\log (\ast) &= \frac{\dime}{2} \log\frac{s_0}{s_1} + (\frac{a_0}{s_0} - \frac{a_1}{s_1}) - \frac{\dime \noisesigma}{2} (\frac{1}{s_0} - \frac{1}{s_1}),
\end{align*}
where $a_0 = \frac{1}{2}\|\nx[n]-\boldsymbol{\mu}_0\|^2$ and $a_1 = \frac{1}{2}\|\nx[n]-\boldsymbol{\mu}_1\|^2$, and we consider diagonal co-variances $\boldsymbol{\Sigma}_0 = \text{diag}(s_0,\cdots,s_0)$ and $\boldsymbol{\Sigma}_1 = \text{diag}(s_1,\cdots,s_1)$. When $n\ge\lambda$ (the line outage occurs), we have $a_0\gg a_1$, which results in $\log (\#^e) \le \log (\ast)$.

For showing $\tilde{\Lambda}(\nx^{1:N}) \le \Lambda(\x^{1:N})$, it suffices to show $\frac{f(\x[n])}{g(\x[n])} \ge |\postsigma|^{\frac{1}{2}}/|\presigma|^{\frac{1}{2}}\exp(\beta_1[n]-\beta_0[n])$, where these two terms are denoted as $(\#)$ and $(\ast)$. In fact, we have
\begin{align*}
    2\log \frac{(\#)}{(\ast)} = \dime(\noisesigma-\|\noise[n]\|^2) (\frac{1}{s_0} - \frac{1}{s_1}) + 2\frac{b_1}{s_1} - 2\frac{b_0}{s_0},
\end{align*}
where $b_0 = \langle \noise[n], \boldsymbol{\mu}_0\rangle$ and $b_1 = \langle \noise[n], \boldsymbol{\mu}_1\rangle$ satisfying $\mathbb{E}_{\noise} b_0 = \mathbb{E}_{\noise} b_1 = 0$. It indicates that $\mathbb{E}_{\noise} [\log \frac{(\#)}{(\ast)}] = 0$, which further gives us $\mathbb{E}_{\noise} [\frac{(\#)}{(\ast)}] \ge \exp(0)=1$ from Jensen's inequality. It implies a higher likelihood that $(\#)>(\ast)$. Since at every time step $n=k$ we will randomly generate a noise vector $\noise[n]$, we can conclude that $\tilde{\Lambda}(\nx^{1:N}) \le \Lambda(\x^{1:N})$.
\end{proof}

From Lemma \ref{lemma:statistic}, the proposed statistic $\tilde{\Lambda}(\nx^{1:N})$ falls between the $\Lambda(\nx^{1:N})$ and $\Lambda(\x^{1:N})$ after the outage event ($n\ge\lambda$), aligning with our expectations in Fig. \ref{fig:statistic}. Consequently, it will exhibit a reduced false alarm rate (FAR) compared to the optimal scenario with raw data access, and will help alleviate the prolongation of average detection delay (ADD).

\begin{corollary}\label{co:performance}
    The proposed $\tilde{\Lambda}$ in (\ref{eq:post_new}) restricts the FAR below $\alpha$, and alleviate the prolongation of ADD, i.e.,
    \begin{align}\label{eq:performance}
    \FAR(\Lambda,f_e,g_e)& \le \FAR(\tilde{\Lambda},f_e,g_e) \le \FAR(\Lambda,f,g)\le \alpha\nonumber\\
    \ADD(\Lambda,f_e,g_e) &\ge \ADD(\tilde{\Lambda},f_e,g_e) \ge \ADD(\Lambda,f,g),
    \end{align}
\end{corollary}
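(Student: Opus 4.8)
The plan is to derive Corollary \ref{co:performance} as a direct consequence of the pointwise statistic ordering in Lemma \ref{lemma:statistic} together with the monotonicity properties of the Shiryaev-Roberts-Pollak stopping rule (\ref{eq:stop rule}) and the asymptotic delay formula (\ref{eq:lowerbound}). First I would observe that both FAR and ADD are monotone functionals of the detection statistic: for a fixed threshold $\frac{1-\alpha}{\rho\alpha}$, a pointwise-larger statistic crosses the threshold no later, hence stops earlier; stopping earlier can only increase the chance of a false alarm ($\tau<\lambda$) and can only decrease the detection delay on the event $\tau\ge\lambda$. Since Lemma \ref{lemma:statistic} gives $\Lambda(\nx^{1:N})\le\tilde\Lambda(\nx^{1:N})\le\Lambda(\x^{1:N})$ for all $N\ge\lambda$, the three corresponding stopping times are ordered $\tau_{\Lambda(\nx)}\ge\tau_{\tilde\Lambda}\ge\tau_{\Lambda(\x)}$ (with the usual caveat that the ordering before $\lambda$ governs false alarms — see below), which immediately yields the chain of FAR inequalities reading left to right and the chain of ADD inequalities reading the same way, matching (\ref{eq:performance}).

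Next I would pin down the two outer inequalities that do not come from Lemma \ref{lemma:statistic} alone. The rightmost FAR bound $\FAR(\Lambda,f,g)\le\alpha$ and the rightmost ADD identity $\ADD(\Lambda,f,g)$ as the asymptotic optimum are exactly Theorem \ref{theorem:Bayesian} and (\ref{eq:lowerbound}) applied with raw data. The leftmost pair, involving $\FAR(\Lambda,f_e,g_e)$ and $\ADD(\Lambda,f_e,g_e)$, follows because running the \emph{same} statistic $\Lambda$ on encrypted data is just the optimal procedure for the pair $(f_e,g_e)$; by Theorem \ref{the:noiseimpact} we have $D_{\text{KL}}(f_e\|g_e)\le D_{\text{KL}}(f\|g)$, and plugging this into (\ref{eq:lowerbound}) gives $\ADD(\Lambda,f_e,g_e)\ge\ADD(\Lambda,f,g)$, while the FAR constraint still holds by construction of the threshold for the encrypted-data problem. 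The middle terms $\FAR(\tilde\Lambda,f_e,g_e)$ and $\ADD(\tilde\Lambda,f_e,g_e)$ are then squeezed between these two by the monotonicity argument of the previous paragraph.

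The main obstacle, and the step I would be most careful about, is that Lemma \ref{lemma:statistic} only controls the statistics for $N\ge\lambda$, whereas the FAR event $\{\tau<\lambda\}$ is governed entirely by the statistic's behaviour \emph{before} the change point. So the clean ``pointwise-larger $\Rightarrow$ earlier stop $\Rightarrow$ larger FAR'' implication needs a separate pre-change argument: under $\mathcal{H}_0$ (no outage yet) one must check that $\tilde\Lambda$ is still dominated by $\Lambda(\x^{1:N})$ in a way that keeps its false-alarm probability below $\alpha$, e.g. via the unbiasedness $\mathbb{E}_{\noise}\beta_i[n]$ matching the raw-data exponent combined with a Ville / optional-stopping bound on the statistic viewed as a supermartingale-like object, exactly as in the classical Shiryaev-Roberts-Pollak analysis. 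I would therefore structure the proof as: (i) recall the monotonicity of $\tau$ in the statistic and of $(\FAR,\ADD)$ in $\tau$; (ii) invoke Lemma \ref{lemma:statistic} for the post-change ordering to get the ADD chain and the two inner FAR inequalities; (iii) invoke Theorem \ref{theorem:Bayesian} and Theorem \ref{the:noiseimpact} for the two outer bounds; and (iv) supply the pre-change supermartingale estimate to legitimize $\FAR(\tilde\Lambda,f_e,g_e)\le\alpha$, flagging this as the one place where more than a one-line citation is needed.
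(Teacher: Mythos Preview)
Your approach is essentially the paper's approach: the paper does not give a separate proof for Corollary~\ref{co:performance} at all, but simply states it as an immediate consequence of the ordering in Lemma~\ref{lemma:statistic} (``Consequently, it will exhibit a reduced false alarm rate \ldots\ and will help alleviate the prolongation of average detection delay''). Your plan---monotonicity of the stopping rule in the statistic, then Lemma~\ref{lemma:statistic} for the sandwich, then Theorem~\ref{theorem:Bayesian} and Theorem~\ref{the:noiseimpact} for the outer bounds---is exactly the informal reasoning the paper intends.

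The one substantive difference is that you are \emph{more} careful than the paper: you correctly flag that Lemma~\ref{lemma:statistic} is stated only for $N\ge\lambda$, whereas the FAR inequalities concern the pre-change regime $\{\tau<\lambda\}$. The paper glosses over this entirely and does not supply the supermartingale/Ville-type argument you propose in step~(iv). So your proposal is not merely correct relative to the paper---it actually identifies and offers to patch a gap that the paper leaves open.
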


As indicated by the proof of Lemma \ref{lemma:statistic}, Jensen's inequality hinders the attainment of a ``perfect'' approximation to the optimal statistic $\Lambda$, resulting in a remaining gap between $\tilde{\Lambda}$ and $\Lambda$. To address this matter, a logical approach is to seek specific conditions under which Jensen's inequality converges toward equality. With this in mind, we modify the statistic in (\ref{eq:post_new}) by introducing a constant term $\gamma\ge1$ as
\begin{align}\label{eq:post_biased}
\tilde{\Lambda}_\gamma(\nx^{1:N}) = \sum_{k=1}^N \pi_N^k \prod_{n=k}^{N}\frac{\sqrt{|\boldsymbol{\Sigma}_0|}\exp\left(\beta_1[n]/\gamma\right)}{\sqrt{|\boldsymbol{\Sigma}_1|}\exp\left(\beta_0[n]/\gamma\right)}.
\end{align}

We refer to the constant term $\gamma$ as the variance scaling factor since it scales the variance of term $\beta_i[n]$ by $1/\gamma^2$ times. When $\gamma=1$, the statistic in (\ref{eq:post_biased}) degrades to the statistic in (\ref{eq:post_new}). We employ this variance scaling factor because Jensen's inequality tends to become equality as the variance of variable approaches zero. To describe the effect of introducing $\gamma$ to scale $\beta_i$, we provide Lemma \ref{lemma:variance}. From previous discussing, the term $\beta_i$ is an unbiased estimation of $\bar{\beta}_i:=-\frac{1}{2}(\x[n]-\boldsymbol{\mu}_i)^T(\boldsymbol{\Sigma}_i)^{-1}(\x[n]-\boldsymbol{\mu}_i)$, whose variance is denoted as $\sigma_i^2$. Thus, the scaled term $\beta_i/\gamma$ used in (\ref{eq:post_biased}) can be modeled in a distribution $\mathcal{P}$ with mean $\bar{\beta}_i/\gamma$ and variance $\sigma_i^2/\gamma^2$. According to the theorem of the Jensen inequality gap, we have the following upper bound w.r.t. to the variance scaling factor $\gamma$.

\begin{lemma}\label{lemma:variance}
    Suppose $|\exp(\beta_i/\gamma)-\exp(\bar{\beta}_i/\gamma)|\le M|\beta_i/\gamma-\bar{\beta}_i/\gamma|^2$ for some $M$ and any $\beta_i/\gamma\in\mathbb{R}$, for any convex function $f$, we have an upper bound of Jensen gap as
    $$
    [\mathbb{E}[f(\frac{\beta_i}{\gamma})] - f(\mathbb{E}[\frac{\beta_i}{\gamma}])] \le M\int |\frac{\beta_i}{\gamma}-\frac{\bar{\beta}_i}{\gamma}|^2 d\mathcal{P}(\frac{\beta_i}{\gamma}) \le M \frac{\sigma_i^2}{\gamma^2}.
    $$
\end{lemma}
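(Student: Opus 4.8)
The plan is to recognize the claimed bound as an instance of the classical ``Jensen gap'' estimate and to prove it directly: rewrite the gap as the expectation of a \emph{centered} remainder, then control that remainder pointwise using the hypothesized quadratic bound on $f$. First I would invoke the fact, established in the paragraph preceding the lemma, that $\beta_i$ is an unbiased estimator of $\bar\beta_i$; under the modeling distribution $\mathcal P$ this says $\mathbb E_{\mathcal P}[\beta_i/\gamma]=\bar\beta_i/\gamma$. Hence the Jensen gap re-centers as
\[
\mathbb E\Big[f\big(\tfrac{\beta_i}{\gamma}\big)\Big]-f\Big(\mathbb E\big[\tfrac{\beta_i}{\gamma}\big]\Big)=\mathbb E\Big[f\big(\tfrac{\beta_i}{\gamma}\big)-f\big(\tfrac{\bar\beta_i}{\gamma}\big)\Big],
\]
so everything reduces to bounding $\mathbb E[f(\beta_i/\gamma)-f(\bar\beta_i/\gamma)]$; convexity of $f$ is used only to guarantee this quantity is nonnegative, so that the upper bound is meaningful.

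For the first inequality I would move the absolute value inside, $\mathbb E[f(\beta_i/\gamma)-f(\bar\beta_i/\gamma)]\le\mathbb E\,|f(\beta_i/\gamma)-f(\bar\beta_i/\gamma)|$, and apply the hypothesis $|f(\beta_i/\gamma)-f(\bar\beta_i/\gamma)|\le M\,|\beta_i/\gamma-\bar\beta_i/\gamma|^2$ pointwise under the integral, obtaining $\mathbb E[f(\beta_i/\gamma)]-f(\mathbb E[\beta_i/\gamma])\le M\int|\beta_i/\gamma-\bar\beta_i/\gamma|^2\,d\mathcal P(\beta_i/\gamma)$. For the second inequality I would note that, since $\bar\beta_i/\gamma$ is the $\mathcal P$-mean of $\beta_i/\gamma$, this last integral is exactly $\mathrm{Var}_{\mathcal P}(\beta_i/\gamma)$, and because multiplying by the constant $1/\gamma$ scales the variance by $1/\gamma^2$ while $\mathrm{Var}(\beta_i)=\sigma_i^2$, it equals $\sigma_i^2/\gamma^2$. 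Chaining the two inequalities gives the statement.

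The computation is routine; the point I would flag as the real subtlety is the standing hypothesis $|\exp(\beta_i/\gamma)-\exp(\bar\beta_i/\gamma)|\le M|\beta_i/\gamma-\bar\beta_i/\gamma|^2$ itself, which cannot hold for \emph{every} real argument with a finite $M$, since the ratio diverges as the argument tends to $+\infty$. This is harmless in the present setting because the encrypted voltage increments, hence $\beta_i$, range over a bounded operating interval, so a finite $M$ exists as the supremum of $|\exp(\cdot)-\exp(\bar\beta_i/\gamma)|/|\cdot-\bar\beta_i/\gamma|^2$ over that compact set; alternatively one may use the sharper form of the Jensen-gap bound obtained by retaining the linear term and bounding the second-order remainder, which replaces $M$ by $\sup_x\big(f(x)-f(\bar\beta_i/\gamma)-f'(\bar\beta_i/\gamma)(x-\bar\beta_i/\gamma)\big)/(x-\bar\beta_i/\gamma)^2$, also finite on a bounded domain. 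Either way, the $1/\gamma^2$ decay is what is needed downstream: it shows that enlarging the variance scaling factor $\gamma$ drives the per-factor approximation error in $\tilde{\Lambda}_\gamma(\nx^{1:N})$ to zero, closing the residual gap to the optimal statistic $\Lambda(\x^{1:N})$.
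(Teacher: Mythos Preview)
Your proposal is correct and is exactly the natural argument: recenter the Jensen gap at the mean $\bar\beta_i/\gamma$ using unbiasedness, bound the integrand pointwise by the assumed quadratic growth, and recognize the resulting integral as the variance $\sigma_i^2/\gamma^2$. The paper in fact states this lemma without proof, invoking it as a standard Jensen-gap estimate; your write-up simply supplies the omitted details, and your remark that the global quadratic hypothesis on $\exp$ only makes sense on a bounded range (which is the practical situation here) is a useful caveat the paper does not spell out.
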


According to Lemma \ref{lemma:variance}, the variance-reduction technique in (\ref{eq:post_biased}) can narrow the gap in the Jensen inequality, consequently achieving a nearly perfect approximation of the optimal statistic. In summary, when implementing the randomization scheme (\ref{eq:addnoise}) to encrypt raw data and utilizing the new statistic (\ref{eq:post_biased}) for outage detection, we outline the privacy-aware line outage detection procedure, referred to as {\bf \Abbe}, in Algorithm \ref{alg:algorithm}. The proposed \Abbe offers two key advantages. First, it ensures privacy preservation by using noise for data encryption. Second, the proposed statistic provides an approximation to the optimal statistic when raw data is accessible, thereby achieving a comparable lower bound on detection delay while limiting the false alarm rate to a predefined tolerance level.

\begin{algorithm}
    \caption{Privacy-aware Line Outage Detection (\Abbe) with Boosted Detection Performance.}
    \label{alg:algorithm}
    \begin{algorithmic}[1] 
    \STATE \textbf{Input}: New voltage data $\x[n]$\\
    \STATE \textbf{Parameter}: Noise variance $\noisesigma$, variance scaling factor $\gamma$\\
    \STATE \textbf{Output}: Outage time
        \STATE Apply {\bf noise} to encrypt raw data.
        $$\nx[n] = \x[n] + \noise[n],\ \noise[n]\sim\mathcal{N}(\boldsymbol{0},\text{diag}(\noisesigma,\cdots,\noisesigma))$$
        \STATE Calculate detection {\bf statistic} $\tilde{\Lambda}(\nx^{1:n})$ in (\ref{eq:post_biased}).\\
        \IF {$\tilde{\Lambda}_{\gamma}(\nx^{1:n}) \geq \frac{1-\alpha}{\rho\alpha}$}
        \FOR{$i,k\in\grid$}
        \IF {$|\rho_{ik}^{-}| > \delta_{\max}$ and $|\rho_{ik}^{+}| < \delta_{\min}$}
        \STATE  {\bf report} outage time $\tau = N$ and the out-of-service branch between bus $i$ and $k$
        \ENDIF
        \ENDFOR
        \ENDIF
    \end{algorithmic}
\end{algorithm}

\vspace{-1em}
\section{Validation On Extensive Outage Scenarios With
Real-world Data}
\label{sec:simulation}
This section evaluates the privacy guarantee, the average detection delay, and the false alarm rate of \Abbe, comparing it with recent baselines on privacy-aware detection methods.

{\bf Dataset configuration.}\ To assess \Abbe across diverse system sizes and environments, we conduct comprehensive experiments using various network configurations. The systems include the IEEE 8-bus and IEEE 123-bus networks \cite{kersting1991radial}, along with two representative European distribution systems: a medium voltage (MV) network in an urban area and a low voltage (LV) network in a suburban area \cite{mateo2018european}. We utilize these two networks from Europe to contrast with standard IEEE bus networks typical of the U.S. for the consideration of  diverse grid architectures. In these two networks, we still focus on using the network's topology to simulate customer-level voltage data for outage detection. In each of these networks, we select bus 1 as the slack bus.

In recognition of the complexities in real-world distribution grid outage scenarios, we explore situations where alternative power sources come into play following a line outage. In such scenarios, relying solely on the ``last gasp'' notification becomes less effective, rendering the detection of line outages more challenging. To model this complexity, we conduct simulations for the following two representative scenarios.

\begin{itemize}[leftmargin=*]
    \item \underline{Mesh networks}. Mesh networks are often used to model networks in urban areas, where most buses retain non-zero voltages after a line outage as they can receive power from alternative branches. To simulate mesh networks, we introduce loops into the aforementioned systems, ensuring their connectivity remains intact after line outages \cite{liao2021quick}. As an example, in the IEEE-123 bus network, we introduce loops by adding two branches: one between bus 77 and 120 and another between bus 50 and 56, with admittances matching that of the branch between bus 122 and 123.
    
    \item \underline{Radial networks with DERs}. In such case, some buses continue to receive power from DERs though isolated from the main grid after a line outage. This type of outage scenario is typical in residential areas. To simulate DERs, we select multiple buses to have solar power panels with batteries as energy storage. For solar panels, we use power generation profiles computed using the PVWatts Calculator \cite{dobos2014pvwatts}.
\end{itemize}

To generate more authentic data, we use real residential power profiles from the Duquesne Light Company (DLC) in Pittsburgh, USA. The DLC dataset comprises anonymized and secure hourly (and 15-minute) smart meter readings of active power from over 5,000 houses throughout the year 2016. Basic statistics of this dataset are provided in Table \ref{tab:DLC}.
\begin{table}[H]
    \centering
    \vskip -0.15in
    \caption{Statistical analysis of DLC power dataset.}
    \label{tab:DLC}
    \vskip -0.1in
    \begin{tabular}{c|c}
    \toprule
    Statistics & Value \\
    \midrule
    Minimum Value        & $-2.6040$    \\
    Maximum Value      & $26.6860$   \\
    Mean     & $0.8473$    \\
    Standard Deviation   & $0.6387$    \\
    Skewness & $1.7441$\\
    \midrule
    \end{tabular}
    \vskip -0.10in
\end{table}

{\bf Implementing details.}\ The time-series voltage magnitude data are generated using the MATLAB Power System Simulation Package (MATPOWER) in MATLAB R2022b. In each distribution system, we assign active power $p_i[n]$ from the DLC power profile to  bus $i$ at time $n$. The reactive power $q_i[n]$ is determined based on a randomly generated power factor $pf_i[n]$, which follows a uniform distribution $\text{Unif}(0.9,1)$. Using the active and reactive power values, we employ MATPOWER to solve power flow equations and derive voltage measurements. Additionally, we simulate outage scenarios by setting the admittance of one or multiple branches to zero and solve the power flow equation again.

For more robust evaluation, each experiment is conducted using Monte Carlo simulation with over 1000 replications, where the voltage sequence in (\ref{eq:sequence}) is generated by concatenating $\lambda-1$ records from pre-outage data and 50 records from post-outage data (50 samples are sufficient since the detection delay in our experiments is lower than 50). The outage time $\lambda$ is randomly generated using a geometric distribution $\text{Geo}(\rho)$. This geometric prior is based on our belief that outages can occur independently at any time step, with an equal probability of $\rho$. We choose $\rho=0.04$ in our experiments, which is derived from historical outage data, indicating that each time step has a 4\% chance of experiencing a line outage. Another threshold $\alpha$
 is set at $10^{-2}$, selected through a cross-validation process that balances statistical analysis with operational needs, ensuring optimal trade-off between detection delay and false alarm rates in the aforementioned grid systems.

After obtaining voltage data from MATLAB, the remaining calculations for outage detection in Algorithm \ref{alg:algorithm} are implemented using Python 3.8 on a personal computer with a Windows 10 operating system, an Intel Core i7 processor clocked at 2.2 GHz, and 16 GB of RAM.

{\bf Baseline methods.}\ In the following experiments, the optimal Bayesian detection procedure with access to raw data ($\Lambda(\x^{1:N})$) is referred to as {\bf benchmark}. It should have optimal detection performance but has no privacy guarantee.
The same detection statistic applied to encrypted data ($\Lambda(\nx^{1:N})$) is referred to as {\bf privacy-only} since it degrades the detection performance. To remove the performance degradation, our proposed method ($\tilde{\Lambda}_{\gamma}(\nx^{1:N})$) in Algorithm \ref{alg:algorithm} is referred to as {\bf \Abbe}.
The noise level $\noisesigma$ and the variance scaling factor $\gamma$ will be further pointed out.
We also compare with recent techniques dealing with privacy concerns, including a private approximation of the change-point maximum likelihood estimation {\bf MLE} \cite{cummings2018differentially}, an uncertain likelihood ratio {\bf ULR} proposed to replace the original detection statistic \cite{hare2021toward}, a sanitize channel method {\bf SCM} \cite{lau2020privacy} that adheres to the information leakage requirement \cite{issa2016operational}, and a private stream aggregation {\bf PSA} \cite{kurt2022online} approach to meet the differential privacy.

\vspace{-1em}
\subsection{Visualization of Privacy Guarantee}
Before evaluating the detection performance of our privacy-aware approach, we first visualize how much privacy is preserved when using DLC data in the IEEE 8-bus system.
To achieve so, we plot the corresponding trade-off function within the Gaussian differential privacy framework \cite{dong2019gaussian} against baseline trade-off functions.
Specifically, as established in Theorem \ref{the:g-dp}, applying our privacy-protection scheme (\ref{eq:addnoise}) yields $G_\frac{\sens}{\noisedev}$- Gaussian differential privacy, and produces corresponding trade-off functions $T(\mathcal{N}(0,1), \mathcal{N}(\frac{\sens}{\noisedev},1))$ at varying levels of noise variance $\noisesigma$. Therefore, we can compare these trade-off functions with baseline trade-off functions $G_\mu:=T(\mathcal{N}(0,1), \mathcal{N}(\mu,1)), \mu=0.5, 1, 3$. This comparison can tell us the difficulty an attacker would encounter in compromising user privacy under our scheme, quantitatively demonstrating how much privacy is preserved.

The results are shown in Fig. \ref{fig:privacyvisualize}, highlighting our method's capability to obscure the distinction between encrypted and raw data, significantly enhancing privacy. With a noise variance of $\noisesigma=5e-3$, differentiating the encrypted data $\nx^{1:N}$ from the raw data $\x^{1:N}$ for an attacker becomes more difficult than distinguishing distributions $\mathcal{N}(0,1)$ and $\mathcal{N}(3,1)$. Increasing the noise variance to $\noisesigma=4e-2$ further intensifies this effect, elevating privacy protection to levels where distinguishing encrypted from genuine data becomes more challenging than differentiating between $\mathcal{N}(0,1)$ and $\mathcal{N}(1,1)$.

\begin{figure}[ht]
\centering
\includegraphics[width=0.9\linewidth]{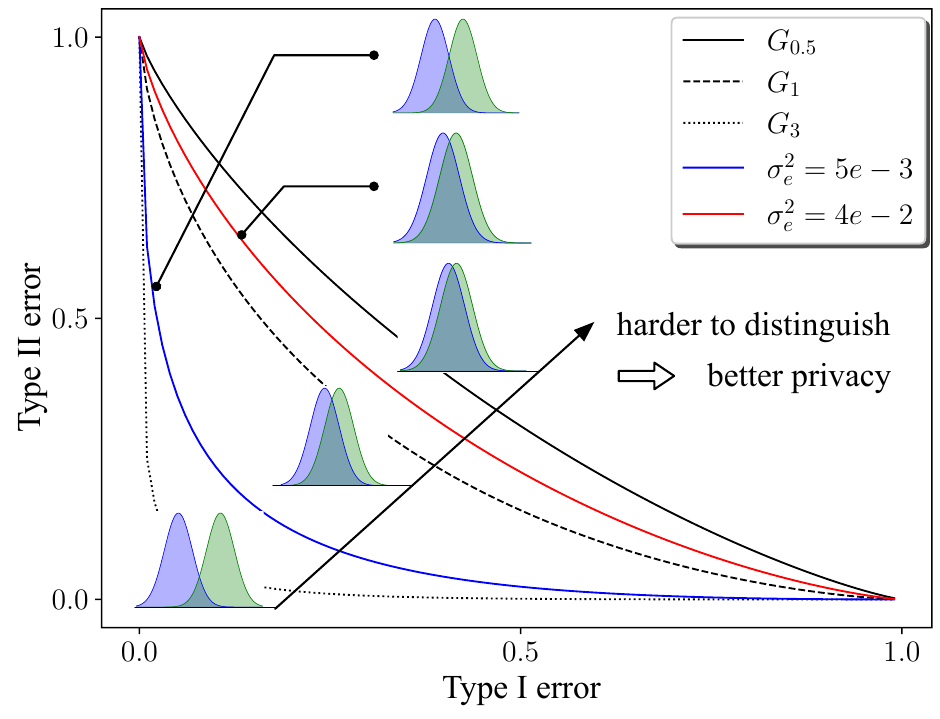}
\vskip -0.15in
\caption{The comparison of trade-off functions of distinguishing unit-variance Gaussian distributions using DLC data and IEEE 8-bus system simulation.}
\vskip -0.15in
\label{fig:privacyvisualize}
\end{figure}

\subsection{Evaluation of the the Noise-mitigation Design}
Despite the privacy protection shown in Fig. \ref{fig:privacyvisualize}, applying the randomizing scheme (\ref{eq:addnoise}) will inevitably lead to a decline in detection performance (see Section \ref{sec:degradation}). Within this subsection, we assess whether our proposed detection procedure \Abbe can counteract this performance degradation. 

To achieve so, we compare between the optimal detection statistic (\ref{eq:statistic}), which lacks privacy protection, and our innovative statistic (\ref{eq:post_new}) with privacy protection. Their logarithmic values are illustrated in Fig. \ref{fig:noise-mitigation}, where the simulation is performed in the IEEE 123-bus system selected specifically to examine the efficacy of \Abbe in a large-scale network. As we can see, the optimal statistic $\Lambda(\x^{1:N})$ (red) increases dramatically after the outage time $\lambda=30$, resulting in a near-zero detection delay. The same statistic applied to encrypted data $\Lambda(\nx^{1:N})$ has a privacy guarantee but postpones the detection (blue). Our proposed statistic $\tilde{\Lambda}(\nx^{1:N})$ (green) in (\ref{eq:post_new}) closely approximates the optimal statistic, thus effectively mitigating the postponing effect while still preserving privacy. This evaluation elucidates our method's dual achievement of maintaining privacy without compromising the timeliness and accuracy of outage detection. Meanwhile, it is worth noting that Fig. \ref{fig:noise-mitigation} serves to validate the conclusions drawn in Lemma \ref{lemma:statistic}.

\begin{figure}[ht]
\centering
\includegraphics[width=0.9\linewidth]{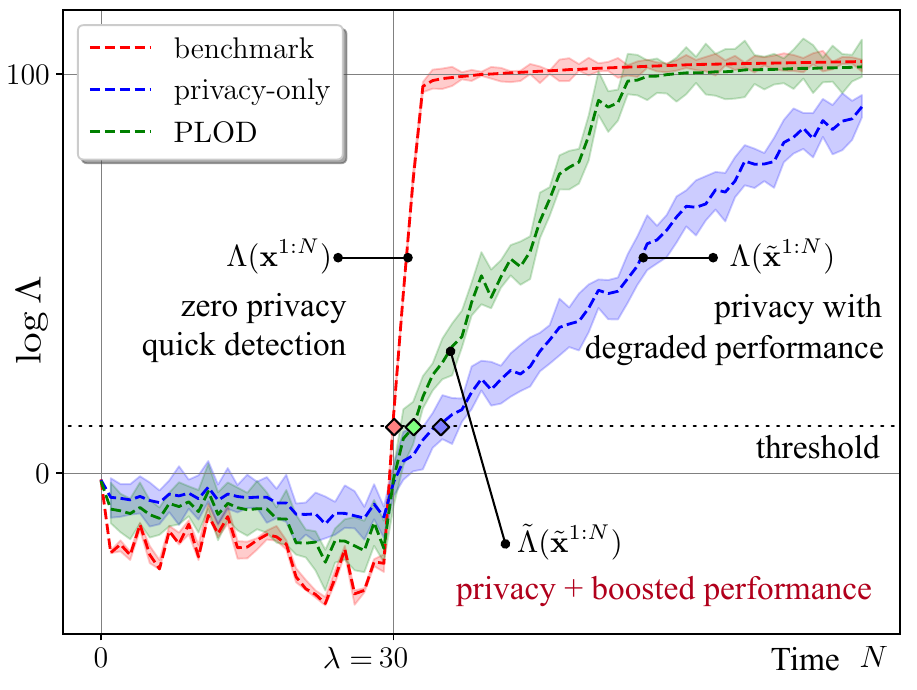}
\vskip -0.1in
\caption{The logarithm of various detection statistics in IEEE 123-bus system. $\lambda=30$, $\noisesigma=4e-2$, $\gamma=1$ and $\alpha=1\%$.}
\vskip -0.2in
\label{fig:noise-mitigation}
\end{figure}

\vspace{-1em}
\subsection{Evaluation of the Variance-reduction Design}
From Fig. \ref{fig:noise-mitigation}, even though our proposed statistic (green) serves to approximate the optimal statistic (red) to mitigate the detection performance degradation, there is still a gap due to the effect of Jensen's inequality. This subsection evaluates \Abbe's ability to fill this gap. Specifically, our newly designed approach in (\ref{eq:post_biased}) aims to reduce the statistic's variance, ultimately narrowing this gap. To confirm this improvement, we plot the logarithm of statistic $\tilde{\Lambda}_\gamma(\nx^{1:N})$ as defined in (\ref{eq:post_biased}) for various choices of the variance scaling factor $\gamma$, and compare them with the optimal statistic $\Lambda(\x^{1:N})$. The results are shown in Fig. \ref{fig:variancereduction}. As $\gamma$ increases from $1$ to $3$, $\tilde{\Lambda}_\gamma(\nx^{1:N})$ progressively converges to a more precise approximation of the optimal statistic $\Lambda(\x^{1:N})$. This convergence is a consequence of the reduced variance in the detection statistic, as shown in two zoomed-in illustrations in Fig. \ref{fig:variancereduction}. This reduction in variance effectively constrains the error gap in Jensen's inequality, as elaborated in Lemma \ref{lemma:variance}.

Our findings illustrate that by applying the adjusted statistic from (\ref{eq:post_biased}) within \Abbe, it's possible to closely mimic the optimal statistic derived from raw data $\x^{1:N}$, even when only encrypted data $\nx^{1:N}$ is available. This indicates \Abbe's potential to not just protect customer data privacy but also to sustain detection performance. The effectiveness of the detection performance in real-life outage dataset will be further explored in the following subsection.

\begin{figure}[ht]
\centering
\vskip -0.10in
\includegraphics[width=0.9\linewidth]{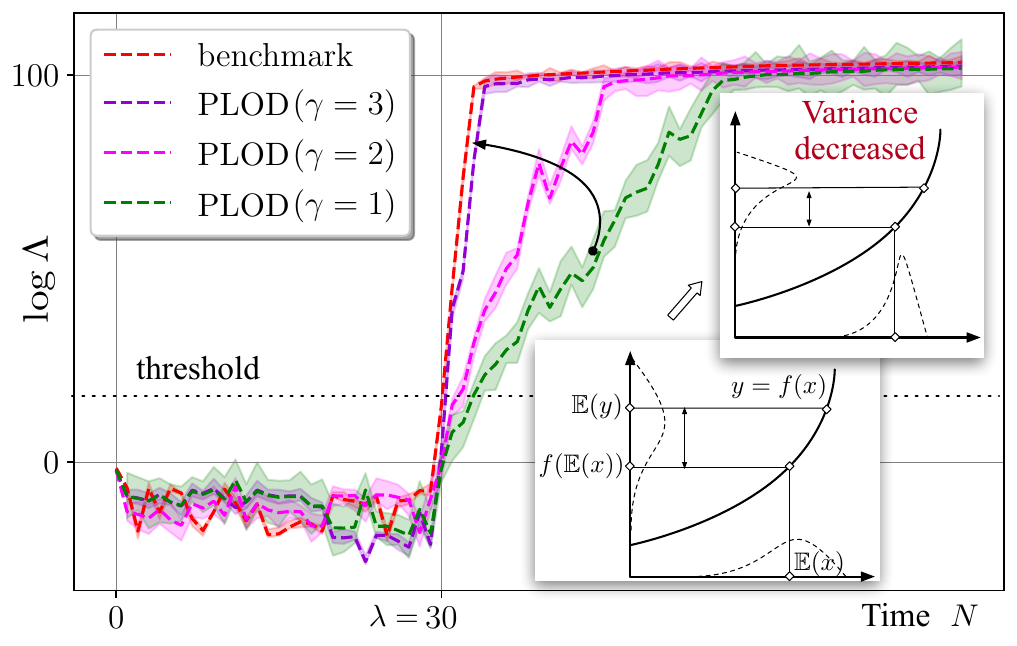}
\vskip -0.1in
\caption{The logarithm of detection statistics with different variance scaling factor $\gamma$ in the IEEE 123-bus system. $\lambda=30$, $\noisesigma=4e-2$, and $\alpha=1\%$.}
\vskip -0.1in
\label{fig:variancereduction}
\end{figure}

\begin{table*}[ht]
    \centering
    \caption{Performance Comparison on Various Systems. $\noisesigma=4e-2$ and $\alpha=1\%$.}
    \vskip -0.05in
    \begin{tabular}{cccc|cccc|cccc}
    \toprule
    System & \multicolumn{3}{c}{System information} & \multicolumn{4}{c}{Average Detection Delay ($1$ unit) $\downarrow$} & \multicolumn{4}{c}{Empirical False Alarm Rate $(\%)$ $\downarrow$} \\
    \midrule
    Mesh network & \#Branches & \#DERs & Outage  & bench. & MLE \cite{cummings2018differentially} & ULR \cite{hare2021toward} & {\bf \Abbe} &  bench. & MLE \cite{cummings2018differentially} & ULR \cite{hare2021toward} & {\bf \Abbe} \\ \midrule
    $8$-bus &  $9$  &$0$ & $4$-$7$                  & $3.10$ & $3.83^{+0.73}$  & $3.94^{+0.84}$ & $3.35^{+0.25}$ & $0.9$ & $4.5^{+3.6}$  & $4.2^{+3.3}$ & $1.2^{+0.3}$\\
    $8$-bus &  $9$  &$8$ & $4$-$7$                  & $3.82$ & $4.76^{+0.94}$ & $4.86^{+1.04}$ & $4.11^{+0.29}$ & $1.4$ & $3.8^{+2.4}$ & $3.8^{+2.4}$ & $1.6^{+0.2}$\\
    $123$-bus &  $124$  &$0$ & $73$-$74$            & $0.89$ & $1.11^{+0.22}$ & $1.09^{+0.2}$ & $0.94^{+0.05}$ & $0.6$ & $1.5^{+0.9}$ & $1.6^{+1.0}$ & $0.8^{+0.2}$\\
    $123$-bus &  $124$  &$0$ & $73$-$74$,$14$-$15$  & $0.88$ & $1.16^{+0.28}$ & $1.16^{+0.28}$ & $0.94^{+0.06}$ & $0.5$ & $1.4^{+0.9}$ & $1.4^{+0.9}$ & $0.7^{+0.2}$\\
    $123$-bus &  $124$  &$0$ & $5$ branches         & $0.84$ & $1.16^{+0.32}$ & $1.07^{+0.23}$ & $0.91^{+0.07}$ & $0.5$ & $1.4^{+0.9}$ & $1.3^{+0.8}$ & $0.7^{+0.2}$\\
    LV suburban & $129$   & $0$  & $26$-$95$        & $3.80$ & $5.21^{+1.41}$ & $4.88^{+1.08}$ & $4.11^{+0.31}$ & $1.5$ & $3.6^{+2.1}$ & $3.4^{+1.9}$ & $2.0^{+0.5}$\\
    LV suburban & $129$   & $30$  & $26$-$95$       & $3.89$ & $5.35^{+1.46}$ & $4.91^{+1.02}$ & $4.19^{+0.30}$ & $1.3$ & $3.1^{+1.8}$ & $3.0^{+1.7}$ & $1.7^{+0.4}$\\
    MV urban & $48$   &$0$   & $34$-$35$            & $0.83$ & $1.08^{+0.25}$ & $1.12^{+0.29}$ & $0.89^{+0.06}$ & $0.4$ & $1.0^{+0.6}$ & $1.0^{+0.6}$ & $0.5^{+0.1}$\\
    MV urban & $48$   &$7$   & $34$-$35$            & $1.45$ & $1.87^{+0.42}$ & $1.91^{+0.46}$ & $1.55^{+0.10}$ & $0.8$ & $1.9^{+1.1}$ & $2.2^{+1.4}$ & $1.1^{+0.3}$\\
    \midrule
    Radial network & \# Branches & \# DERs & Outage  & bench. & MLE \cite{cummings2018differentially} & ULR \cite{hare2021toward} & {\bf \Abbe} &  bench. & MLE \cite{cummings2018differentially} & ULR \cite{hare2021toward} & {\bf \Abbe} \\ \midrule
    $8$-bus &  $7$  &$8$ & $4$-$7$                  & $3.5$ & $4.88^{+1.38}$ & $4.41^{+0.91}$ & $3.8^{+0.3}$ & $0.8$ & $1.9^{+1.1}$ & $1.9^{+1.1}$ & $1.0^{+0.2}$\\
    $8$-bus &  $7$  &$8$ & $2$-$6$                  & $3.58$ & $4.93^{+1.35}$ & $4.61^{+1.03}$ & $3.84^{+0.26}$ & $0.9$ & $2.1^{+1.2}$ & $2.3^{+1.4}$ & $1.1^{+0.2}$\\
    $123$-bus &  $122$ &$12$ & $73$-$74$            & $1.29$ & $1.73^{+0.44}$ & $1.69^{+0.4}$ & $1.39^{+0.1}$ & $0.5$ & $1.3^{+0.8}$ & $1.3^{+0.8}$ & $0.6^{+0.1}$\\
    $123$-bus &  $122$ &$122$ & $73$-$74$           & $6.75$ & $9.22^{+2.47}$ & $8.63^{+1.88}$ & $7.31^{+0.56}$ & $2.8$ & $7.8^{+5.0}$ & $6.7^{+3.9}$ & $3.8^{+1.0}$\\
    LV suburban & $114$   & $30$  & $26$-$95$       & $3.25$ & $4.43^{+1.18}$ & $4.34^{+1.09}$ & $3.53^{+0.28}$ & $1.4$ & $3.3^{+1.9}$ & $3.8^{+2.4}$ & $1.8^{+0.4}$\\
    LV suburban & $114$   & $113$  & $26$-$95$      & $8.87$ & $11.76^{+2.89}$ & $11.64^{+2.77}$ & $9.58^{+0.71}$ & $4.0$ & $10.1^{+6.1}$ & $10.6^{+6.6}$ & $5.4^{+1.4}$\\
    MV urban & $38$   &$7$   & $34$-$35$            & $1.45$ & $1.96^{+0.51}$ & $1.93^{+0.48}$ & $1.57^{+0.12}$ & $0.7$ & $1.7^{+1.0}$ & $1.9^{+1.2}$ & $0.9^{+0.2}$\\
    MV urban & $38$   &$7$   & $23$-$35$            & $1.69$ & $2.3^{+0.61}$ & $2.15^{+0.46}$ & $1.82^{+0.13}$ & $1.1$ & $3.2^{+2.1}$ & $2.6^{+1.5}$ & $1.5^{+0.4}$\\
    \bottomrule
    \end{tabular}
    \label{tab:my_label}
    \vskip -0.1in
\end{table*}

\subsection{Evaluation of Detection Performance: Average Detection Delay and False Alarm Rate}
After verifying the effects of our various designs within the proposed method \Abbe, we finally evaluate its overall detection performance when using encrypted data $\nx^{1:N}$. This evaluation covers both the average detection delay and the false alarm rate, aiming to provide a holistic view of the system's performance under privacy-preserving conditions.
To validate the asymptotic optimality of the detection delay in Theorem \ref{theorem:Bayesian}, we plot in the upper half of Fig. \ref{fig:overall-syn} the average delay $\E(\tau-\lambda|\tau\ge\lambda)$ divided by $|\log\alpha|$ and the theoretical lower bound $-\log(1-\rho)+D_{\text{KL}}(f||g)$. 
This comparison highlights that the detection delay of both the benchmark and the \Abbe method can achieve the optimal lower bound asymptotically. Notably, this asymptotic convergence contrasts with the higher delays observed in privacy-only approach, emphasizing the effectiveness of \Abbe in balancing privacy with quick detection response.

The detection rule in Theorem \ref{theorem:Bayesian} is also expected to restrict the false alarm rate below a predefined threshold $\alpha$. To confirm this, we analyze the empirical false alarm rate $\P(\tau<\lambda)$ in comparison to $\alpha$, as depicted in the lower section of Fig. \ref{fig:overall-syn}. Notably, our method parallels the benchmark performance, consistently keeping the empirical false alarm rate beneath $\alpha$, particularly as $\alpha$ approaches zero. This outcome underscores the efficiency of our algorithm in detecting line outages with minimal false alarms, demonstrating its reliability even with the utilization of encrypted data, thereby reinforcing its practical applicability in maintaining system security while adhering to privacy constraints.

\begin{figure}[H]
\centering
\vskip -0.10in
\includegraphics[width=1\linewidth]{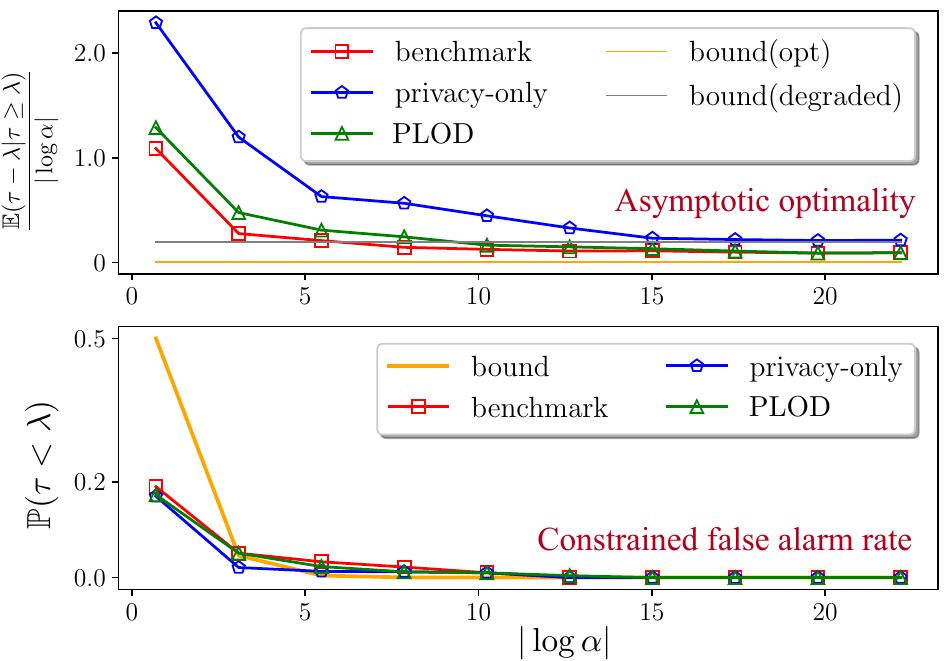}
\vskip -0.05in
\caption{The average detection delay (upper half) and the false alarm rate (lower half) in the IEEE 123-bus system. $\noisesigma=4e-2$, $\gamma=1$ and $\alpha=1\%$.}
\vskip -0.15in
\label{fig:overall-syn}
\end{figure}

To evaluate \Abbe across diverse grid systems under various outage configurations, we present a comprehensive summary of results in Table \ref{tab:my_label}. Throughout these experiments, we conduct comparisons not only with the benchmark method but also with two recent relevant techniques that provide privacy-aware approaches for detecting distribution changes. These methods are referred to as {\bf MLE} (Maximum Likelihood Estimation)  \cite{cummings2018differentially}, {\bf ULR} (Utility Learning-based Rule) \cite{hare2021toward}, {\bf SCM} (Sanitize Channel-based Method) \cite{lau2020privacy}, and {\bf PSA} (Private Stream Aggregation) \cite{kurt2022online}. To ensure consistency in the level of differential privacy guarantees across all methods, we apply noise with a variance of $\noisesigma=4e-2$ to the raw data.

Table \ref{tab:my_label} showcases \Abbe's ability to handle a variety of outage scenarios in both mesh (loopy) networks and radial networks with DERs penetration. Of particular note is the performance of \Abbe when no access to raw data is available, as it demonstrates shorter detection delays and lower false alarm rates compared to MLE and ULR. In contrast to the benchmark method, which has access to raw data and therefore carries a privacy risk, \Abbe exhibits only marginal degradation in detection delay and false alarm rate. Moreover, Table 1 reveals two significant observations. Firstly, in cases where multiple branches undergo simultaneous outages, the average detection delay tends to be shorter. This can be attributed to the increased Kullback-Leibler (KL) distance between the distributions $g$ and $f$ when multiple lines are disconnected. Secondly, in radial networks with a greater number of simulated DERs, the detection of line outages takes more time, primarily due to the smaller KL distance between $g$ and $f$ in such scenarios.

To assess detection performance under varying levels of noise introduced to the raw data, we present Table \ref{tab:outage}, which includes results for both the average detection delay (ADD) and false alarm rate (FAR). To ensure consistency in the level of differential privacy guarantees across all methods, we argue that implementing noise with a variance of $\noisesigma=9e-2$ in the \Abbe, MLE, and ULR techniques achieves a privacy level comparable to that provided by the SCM and SPA methods. The analysis from Table \ref{tab:outage} reveals that our method surpasses other privacy-preserving techniques in terms of both average detection delay and false alarm rate. This superior performance is attributed to our unique statistical design in (\ref{eq:post_new}) and (\ref{eq:post_biased}), which effectively mitigates the negative effects associated with the noise from privacy protection mechanisms. The results also reveal other method's characteristics. For example, the PSA method typically results in a longer detection delay and a lower false alarm rate. This could be due to its reliance on data aggregation techniques for privacy protection, which, while safeguarding data privacy, may inadvertently delay the timely detection of outages. Nonetheless, for operators prioritizing robust and consistent detection over detection speed, PSA presents a viable option.

\begin{table}[H]
\centering
\vskip -0.1in
\caption{Performance comparison at different noise level in the IEEE 8-bus system. $\alpha=1\%$ and $\gamma=3$.}\label{tab:outage}
\vskip -0.1in
\begin{tabular}{lccc}
    \toprule
    Noise & Method & ADD(unit) & FAR(\%) \\
    \midrule
    \multirow{ 3}{*}{$\noisesigma=0.01$} & \Abbe & ${\bf 2.71}\pm 0.53$ & ${\bf 1.03}\pm 0.02$\\
    & MLE \cite{cummings2018differentially} & $3.41\pm 0.72$ & $1.12\pm 0.15$\\
    & ULR  \cite{hare2021toward} & $3.80\pm 0.89$ & $1.57\pm 0.19$ \\
    \midrule
    \multirow{ 3}{*}{$\noisesigma=0.04$} & \Abbe & ${\bf 3.67}\pm 0.65$ & ${\bf 0.95}\pm 0.03$\\
    & MLE & $4.12\pm 0.83$ & $1.06\pm 0.37$\\
    & ULR  & $4.69\pm 1.01$ & $2.16\pm 0.74$ \\
    \midrule
    \multirow{ 3}{*}{$\noisesigma=0.09$} & \Abbe & ${\bf 4.56}\pm 0.77$ & ${\bf 3.82}\pm 7.39$ \\
    & MLE & $4.85\pm 0.81$ & $4.45\pm 0.81$ \\
    & ULR & $4.69\pm 1.01$ & $3.99\pm 0.74$ \\
    \midrule
    \multirow{ 2}{*}{} & SCM \cite{lau2020privacy}  & $4.64\pm 0.92$ & $4.02\pm 0.85$ \\
    & SPA \cite{kurt2022online} & $5.35\pm 1.30$ & $2.88\pm 0.19$ \\
    \bottomrule 
\end{tabular}
\vskip -0.05in
\end{table}

\subsection{Outage Branch Localization with Accuracy}
\label{sec:local_experiment}
Upon detecting an outage, we proceed to compute the conditional correlation between buses to pinpoint the out-of-service branch, as outlined in Theorem \ref{lemma:local}. Illustrated in Fig. \ref{fig:localization}, the absolute conditional correlation of every bus pair within the loopy 8-bus system is depicted both before and after a line outage at branch 4-7. Notably, the value within the red box transitions from a non-zero value pre-outage ($\rho^{-}_{47}>\delta_{\max}$) to near zero post-outage ($\rho^{+}_{47}<\delta_{\min}$), leading to the accurate localization of the out-of-service branch at 4-7, consistent with the ground truth. Furthermore, Fig. \ref{fig:localization}(d) demonstrates that the localization method using the covariance matrix through \Abbe with randomized data remains as effective as the benchmark scenario. This effectiveness likely stems from the fact that adding independent noise to the voltage data does not disrupt the condition where the voltage data of the two buses become conditionally independent following the line outage.

\begin{figure}[ht]
\centering
\vskip -0.15in
\includegraphics[width=1\linewidth]{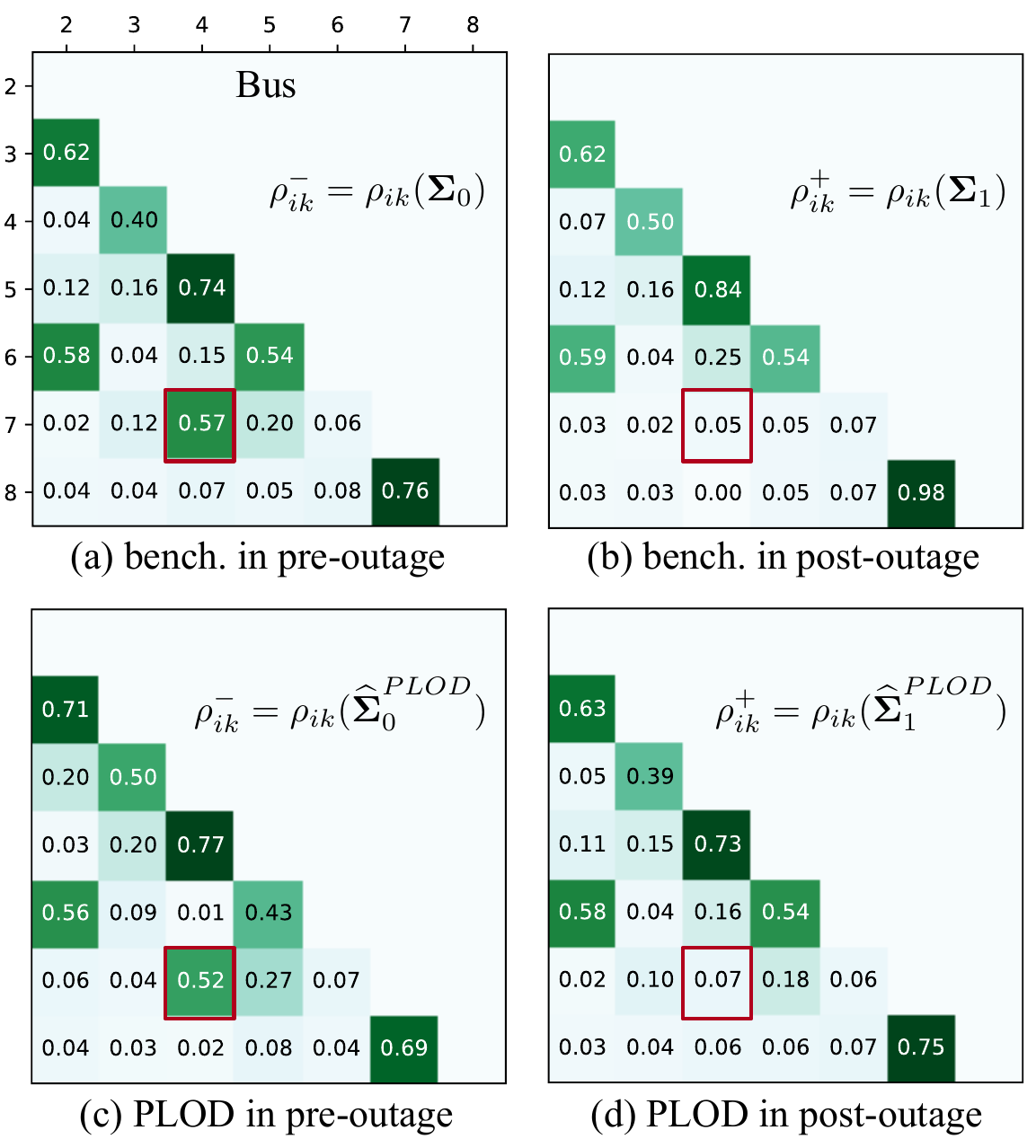}
\vskip -0.1in
\caption{Absolute conditional correlation of the loopy 8-bus system before and after an outage in branches 4-7. We choose $\delta_{\max}=0.5$ and $\delta_{\min}=0.1$.}
\vskip -0.1in
\label{fig:localization}
\end{figure}

Table \ref{tab:localize} displays the localization accuracy rates obtained from 1,000 experiments. It is clear that the proposed method \Abbe consistently achieves a localization accuracy of over 92\%, even when operating in privacy-aware scenarios where voltage data is subjected to randomized noise.

\begin{table}[H]
    \centering
    \vskip -0.15in
    \caption{Outage localization accuracy (\%) in algorithm \ref{alg:algorithm}. $\alpha=1\%, \delta_{\max}=0.5$ and $\delta_{\min}=0.1$.}
    \vskip -0.1in
    \begin{tabular}{c|c|cc}
    \toprule
    System & bench. & MLE & \Abbe \\
    \midrule
    $8$-bus         & $97.7$    & $92.9$ & $94.3$\\
    $123$-bus       & $95.0$    & $90.1$ & $92.7$ \\
    LV suburban     & $96.1$    & $91.4$ & $94.4$\\
    MV urban        & $95.8$    & $92.6$ & $93.8$ \\
    \midrule
    \end{tabular}
    \label{tab:localize}
    \vskip -0.2in
\end{table}

\vspace{-1em}
\subsection{Sensitivity Analysis to Data Coverage}
\label{sec:sensitivity}
In the distribution grid, access to the data of every bus is not guaranteed for several reasons. For instance, rural areas may lack smart meter installations, technical issues can result in data loss, and privacy concerns might lead to data refusal. Thus, an analysis of incomplete smart meter data coverage is necessary to assess \Abbe's real-world detection performance.

Based on records from \cite{USreport3}, over 107 million smart meters have covered 75\% of U.S. households by 2021. Therefore, we simulate the scenario where a fraction of buses (ranging from 75\% to 100\%) are randomly selected to provide voltage measurements for outage detection. The outcomes, illustrated in Fig. \ref{fig:coverage}, reveal how much additional Average Detection Delay (ADD) and False Alarm Rate (FAR) is introduced at various coverage ratios in comparison to 100\% data coverage. For instance, when applying \Abbe with variance scaling factors $\gamma$ equal to 1, 2, and 3, a 75\% data coverage ratio necessitates an additional 2.5, 1.9, and 1.7 data samples, respectively, to detect the outage. Simultaneously, the false alarm rate increases by 9.5\%, 11.9\%, and 13.1\%, respectively. It's noteworthy that when data coverage is not complete, increasing the variance scaling factor $\gamma$ involves a trade-off: it reduces detection delay at the expense of introducing more false alarms.

We note that our method doesn't rely on the assumption of 100\% sensor data coverage across the grid. In reality, power line outages tend to impact a majority of buses in the system, with the extent of impact varying based on their proximity to the outage location. It allows us to identify outages by detecting distribution changes in sensor data from some, rather than all, buses located near the source of the outage.

\begin{figure}[H]
\centering
\vskip -0.10in
\includegraphics[width=1\linewidth]{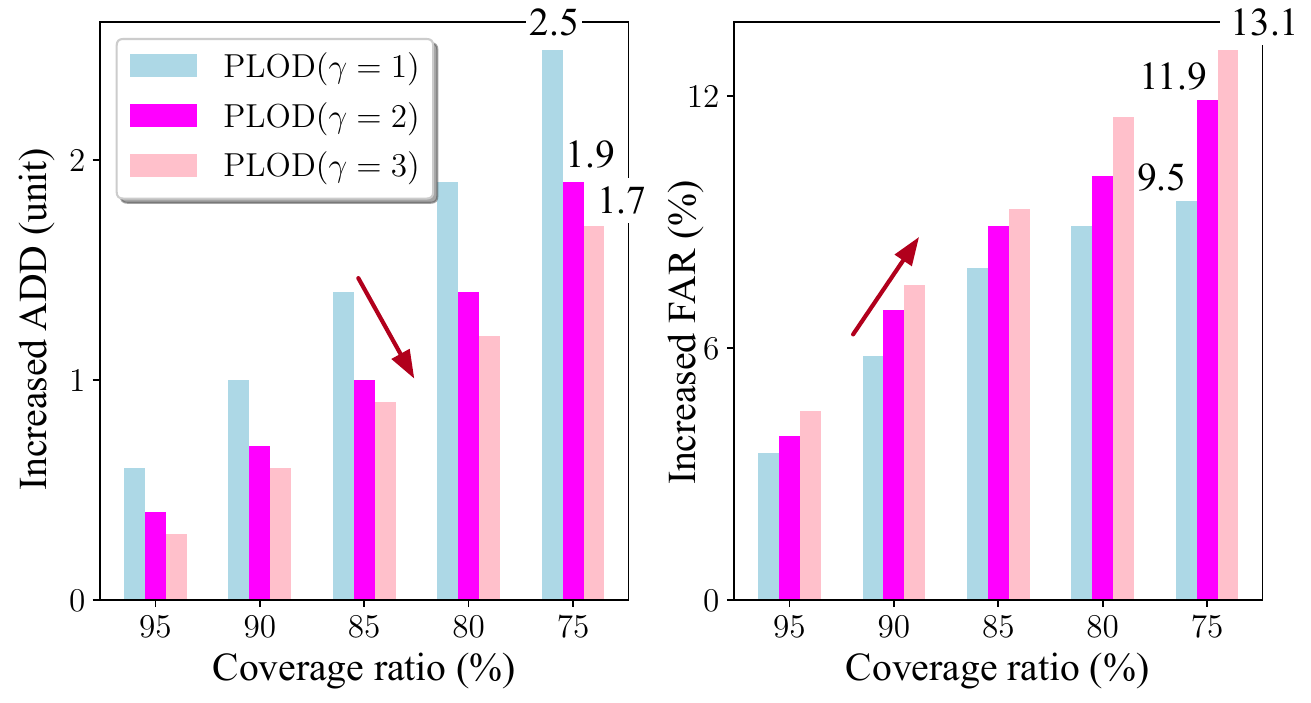}
\vskip -0.15in
\caption{Increased ADD (unit) and FAR (\%) under different ratios of data coverage compared to 100\% coverage in 123-bus loopy system. $\alpha=1\%$.}
\vskip -0.10in
\label{fig:coverage}
\end{figure}

\section{Conclusion}
\label{sec:conclusion}
In this paper, we introduce a robust privacy-aware method for detecting line outages in the distribution grids, effectively striking a balance between preserving privacy and maintaining detection performance. Our contributions encompass several key aspects. First, we ensure the direct protection of raw data through a randomization scheme embedded within the differential privacy framework. Second, we quantify the trade-off between privacy protection and detection performance, considering factors such as detection delay and false alarm rate. Lastly, we introduce a novel detection statistic that mitigates the adverse impact of encrypted data on detection performance, and in some cases, entirely eliminates it.

To validate our contributions, we conduct extensive experiments across a range of network systems, comprising 17 distinct outage configurations. The empirical results underscore the success of our privacy-aware outage detection methodology, achieving a harmonious blend of privacy preservation and detection performance that is comparable to the optimal case.

We also highlight that the proposed privacy-preserving technique, developed for outage detection, exhibits a versatile applicability extending beyond its initial scope to encompass various types of fault and anomaly detection challenges within power systems. This adaptability stems from the method's foundational approach to data protection, ensuring that sensitive information remains secure while still enabling the accurate identification of many types of anomalies. Consequently, whether addressing line faults, equipment malfunctions, or unexpected fluctuations in power distribution networks, our technique maintains its effectiveness, offering a robust solution for enhancing grid resilience and reliability without compromising customer privacy.

\vspace{-1em}
\bibliographystyle{IEEEtran}
\bibliography{reference}

\end{document}